\newtheorem{theorem}{Theorem}
\newtheorem{definition}{Definition}
\newtheorem{example}{Example}
\newtheorem{remark}{Remark}
\newcommand{\E}{\mathbb{E}}
\newcommand{\MA}{\mathcal{A}}
\newcommand{\MB}{\mathcal{B}}
\newcommand{\MU}{\mathcal{U}}
\newcommand{\RR}{\mathbb{R}}
\newcommand{\EE}{\mathbb{E}}
\newcommand{\PP}{\mathbb{P}}
\date{}
\def\BibTeX{{\rm B\kern-.05em{\sc i\kern-.025em b}\kern-.08em
		T\kern-.1667em\lower.7ex\hbox{E}\kern-.125emX}}
\begin{document}

	\title{Coded Water-Filling for Multi-User Interference Cancellation
	}
\author{
   \IEEEauthorblockN{Yuan Li\IEEEauthorrefmark{1},
                     Zicheng Ye\IEEEauthorrefmark{1},
                     Huazi Zhang\IEEEauthorrefmark{1},
                     Jun Wang\IEEEauthorrefmark{1},
                     Jianglei Ma\IEEEauthorrefmark{1} and
                     Wen Tong\IEEEauthorrefmark{1}
\\}
   \IEEEauthorblockA{\IEEEauthorrefmark{1}%
                     Huawei Technologies Co. Ltd.\\}
    Email: tongwen@huawei.com}
	
	\maketitle
	
	\begin{abstract}
In this paper, we study the system-level advantages provided by rateless coding, early termination and power allocation strategy for multiple users distributed across multiple cells. In a multi-cell scenario, the early termination of coded transmission not only reduces finite-length loss akin to the single-user scenario but also yields capacity enhancements due to the cancellation of interference across cells. We term this technique \emph{coded water-filling}, a concept that diverges from traditional water-filling by incorporating variable-length rateless coding and interference cancellation.

We formulate a series of analytical models to quantify the gains associated with coded water-filling in multi-user scenarios.  First, we analyze the capacity gains from interference cancellation in Additive White Gaussian Noise (AWGN) channels, which arises from the disparity in the number of bits transmitted by distinct users. Building upon this, we broaden our analysis to encompass fading channels to show the robustness of the interference cancellation algorithms. Finally, we address the power allocation problem analogous to the water-filling problem under a multi-user framework, proving that an elevation in the water-filling threshold facilitates overall system capacity enhancement. Our analysis reveals the capacity gains achievable through early termination and power allocation techniques in multi-user settings. These results show that coded water-filling is instrumental for further improving spectral efficiency in crowded spectrums.
		
	\end{abstract}

\section{Introduction}\label{section1}
Next-generation wireless communications present new research opportunities. While much of the recent research efforts are made to harness the millimeter wave and centimeter wave bands, the refarming of low-frequency bands also requires novel wireless technologies. These bands, typically occupied by legacy technology, will be released for new technology. A question naturally arises: ``can we utilize these bands more efficiently than the previous way''? In this paper, we will present positive results bolstered by theoretical analysis.

The two main characteristics of low-frequency wireless transmissions are (i) less path loss and thus broader coverage, and (ii) narrower bands and thus longer transmission duration. A striking issue caused by these two characteristics is the multi-user interference, because unintended signals in these bands last longer and propagate farther. To further improve spectrum efficiency in these bands, one must seek to reduce the interference caused by these signals.

This paper focuses on the scenario with a set of users distributed across multiple cells, and proposes to early terminate successfully decoded transmissions so that the interference to other users can be reduced. In order to implement early termination, more frequent and precise feedback mechanisms must be available. These features are not supported in legacy systems.

This paper theoretically analyzes the proposed coded water-filling techniques, which comprise rateless coding, early termination and power allocation. Specifically, for any user within a single cell, the presence of other-cell users introduces interference, leading to a degraded signal-to-interference-plus-noise ratio (SINR). Timely termination of transmissions can reduce this cross-cell interference, and thus enhance system capacity. In multi-user scenarios, coded water-filling algorithms can achieve additional capacity gain through proper power allocation. Before delving into our analysis, we first introduce the concepts of feedback and water-filling, essential for understanding the intricacies of our approach.

\subsection{Variable-Length Feedback Codes}
	
In a single-user scenario, the capacity loss associated with fixed-length codes is directly proportional to the square root of the code length \cite{Strass}. A finite-length analysis of the relationship between the maximum coding rate and error probability in a point-to-point communication model with a fixed code length is established in \cite{poly1}. The maximum number of information bits for a fixed-length code can be approximated by:
\begin{equation}
\log M^*(n,\epsilon) = nC-\sqrt{nV}Q^{-1}(\epsilon) + O(\log n),
\end{equation}
where $M^*(n,\epsilon)$ is the maximum number of codewords with a code length of $n$ and an error probability of $\epsilon$, $C$ represents the channel capacity, $V$ is the channel dispersion, and $Q(x)=\int_{x}^{\infty} \frac{1}{\sqrt{2 \pi}} e^{-\frac{t^2}{2}}dt$. Shannon demonstrated that while feedback does not augment the capacity of point-to-point memoryless channels \cite{shannon}, it can simplify the coding scheme, as noted in \cite{fast1} and \cite{fast2}. Variable-length feedback codes, however, can flexibly adjust the code length according to channel noise power, thereby minimizing the finite-length capacity loss. In contrast, fixed-length codes must maintain a codeword length that ensures successful decoding under most channel noise realizations. However, variable-length feedback codes typically require the noiseless feedback about the received message to derive the noise of the forward channel. This can be both demanding and impractical. Therefore, stop-feedback codes are introduced to reduce the amount of feedback to one bit, and hence received considerable attention due to their simpler implementation. In variable-length stop-feedback (VLSF) codes, the receiver signals an acknowledgment to the transmitter by sending a single-bit ACK once it has gathered sufficient information to decode the message. Despite the minimal feedback overhead required, \cite{poly1} and \cite{poly2} demonstrate that VLSF codes can mitigate the finite-length rate loss efficiently. The finite-length achievable coding rate for a VLSF code can be approximated by:
\begin{equation}
\log M_f^*(\ell,\epsilon) = \frac{\ell C}{(1-\epsilon)} + O( \log \ell),
\end{equation}
where $0<\epsilon<1$, and $M_f^*(\ell,\epsilon)$ is the maximum number of codewords with an average code length of $\ell$ and an error probability of $\epsilon$. It is evident that the capacity is enhanced by a factor of $\frac{1}{1-\epsilon}$, and the gap to capacity narrows down from $O(\sqrt{\ell})$ to $O(\log \ell)$. This underscores the efficacy of VLSF codes in optimizing capacity under finite-length constraints.

The findings obtained in the single-user scenario are extended to the multiple access channel (MAC). Under the discrete memoryless two-user MAC model, \cite{Tril1} provides the achievability bounds for the joint decoding performance of VLSF codes by considering the joint mutual information $i(X_1,X_2;Y)$ under joint decoding. This means that users stop transmission simultaneously once the receiver has the capability to jointly decode messages from both users. Moreover, the authors also present the achievability bounds under successive interference cancellation (SIC) decoding. Here, the receiver first decodes the message of one user, then cancels the inference from the already-decoded user before decoding the other message. Although joint decoding significantly outperforms SIC decoding in terms of performance, the complexity of the former grows exponentially as the number of users increases. The results in \cite{Tril1} are extended to the Gaussian MAC with power constraints \cite{gassian1}. The researchers developed techniques to obtain achievability and converse bounds for Gaussian channels, and provided the second-order asymptotic term for the Gaussian MAC. Inspired by concepts related to composite channels \cite{compound}, Trillingsgaard \emph{et. al.} studied discrete memoryless broadcast channels with stop feedback for two users \cite{Tril2} and for $K$ users \cite{Tril3} with a common message transmission.
Although users in \cite{Tril3} have distinct stopping times, the interference cancellation benefits of early termination are not manifested, as inter-user interference is not taken into account.
	
In an effort to minimize the number of decoding attempts, feedback is made available only at $L$ predetermined decoding times. In \cite{Will1}, Williamson \emph{et. al.} numerically optimized the value of the decoding time $L$, and employed punctured convolutional codes and a Viterbi decoding algorithm to provide a practical scheme. R. C. Yavas \emph{et. al.}. \cite{time1}  devised an integer programming procedure aimed at minimizing the upper bound of the average block length across all decoding times, subject to constraints imposed by the average error probability and the integer condition. Recently, in \cite{review}, R. C. Yavas \emph{et. al.} compiled an overview of the findings on VSLF codes, examining the achievability bounds and converse bounds of VSLF codes in point-to-point, multiple access, and random access communication scenarios.

\subsection{Water-filling}

	The analysis of Gaussian channel capacity was first undertaken by Shannon in 1948 \cite{shannon1}, a seminal work that laid the groundwork for modern information theory. Building upon this, the water-filling solution, a pivotal technique for maximizing the capacity of parallel Gaussian channels, was introduced in \cite{shannon2}. This approach has since been refined and adapted to accommodate the complexities of time-continuous Gaussian channels, as evidenced by the comprehensive studies in \cite{timeconti1,timeconti2,timeconti3,timeconti4,timeconti5}. The classical water-filling method was introduced in detail in \cite{water2}, this method ingeniously parallels the allocation of power across channels to the distribution of water in a series of containers. The water-filling technique has been successfully applied to a wide range of communication scenarios, as showcased in \cite{water3.1}, \cite{water3.2}, and \cite{water3.3}. Notably, its utility extends to optimizing objectives beyond capacity maximization, such as the minimization of bit-error rate (BER) across a set of parallel channels, as explored in \cite{water4}.

In \cite{constant1}, the authors studied a discrete-time fading channel model for a single-user scenario. They investigated the capacity of this fading channel under the constraint of average transmission power. The signal-to-noise ratio (SNR) fluctuations induced by random channel fading coefficients resemble those of parallel Gaussian channels. Recognizing the complexity involved in the computation of the optimal water-filling method, the authors of \cite{constant11} observed that by ``allocating zero power to channels that would receive zero power under precise water-filling and a constant power in the remaining subchannels'', the optimal solution can be approximated. An approximate water-filling method, incorporating constant power allocation, is proposed in \cite{constant2}. After deriving the lower bound of the dual optimization problem, the author proved that the approximate water-filling scheme is close to optimal. A similar result is shown in \cite{water15}, as the signal-to-noise ratio approaches infinity, the water-filling power allocation function converges pointwise to a function that allocates fixed power to all non-zero channel gain states. In \cite{constant3}, the water-filling problem is extended to the MAC scenario. The optimal strategy is again water-filling, albeit a condition that transmission privileges are granted to the user with the most favorable channel conditions. This adaptation ensures that the benefits of water-filling are maximized while mitigating the effects of interference among multiple users.
	
The exploration of water-filling problems has been extended to encompass the Multiple-Input Multiple-Output (MIMO) scenario. In the Single-Input Single-Output (SISO) systems, traditional water-filling techniques offer a closed-form solution to the optimal transmission problem. In contrast, the MIMO variant of the water-filling problem often presents a more intricate challenge, leading to nonconvex problem formulations that are notably harder to solve. \cite{water5} broadens the scope of water-filling research by considering various optimization criteria, including the mean square error (MSE), SNR, and BER, demonstrating the versatility of the water-filling approach across different design metrics. \cite{water6.1} further demonstrates a system that aims at minimizing the MSE matrix determinant in MIMO configurations. The iterative water-filling algorithm, as presented in \cite{water13}, has become a prevalent algorithm for tackling power allocation optimization in MIMO broadcast channels \cite{water7} \cite{water8}. For a sufficiently large number of users, the channel capacity achieved with a water-filling scheme converges to a constant value, irrespective of the inherent randomness of the MIMO channel \cite{water9}. Given the widespread utility of the water-filling method, the design and optimization of water-filling algorithms have become important research topics. Works such as \cite{water10}, \cite{water11}, and \cite{water12} continue to enhance the efficiency and performance of water-filling techniques.
	
Water-filling is also extended to a variety of communication scenarios, showcasing its adaptability and significance across diverse contexts. In \cite{water20}, the authors tackle the water-filling problem under a fading channel model with multiple users in the presence of crosstalk. They propose a modified iterative water-filling algorithm that incorporates an updated crosstalk term. The work of \cite{water22} studies a scenario where a specific user communicates through dual base stations. The authors derive a closed-form solution to the optimal power allocation problem in this complex setting, upon which they propose a novel joint power allocation scheme. \cite{water23} examines a MAC with time-varying Gaussian fading for multiple users, where the channel remains static within each unit of time. Departing from the conventional constraint of constant total power, the optimization problem is redefined to focus on a scheduling strategy that minimizes energy consumption. Building upon this novel constraint, the authors propose an iterative dynamic water-filling algorithm, which dynamically adjusts power allocation to optimize energy efficiency in time-varying channels. The water-filling method can be combined with the game theory, as demonstrated in \cite{water27}. Here, the water-filling power allocation problem is characterized as a game, with power allocation corresponding to each user's strategic control of power to maximize their individual code rates. The authors prove that, within this game model, the water-filling game reaches a unique Nash equilibrium.
	
\subsection{Contributions}
In this paper, we study the system-level advantages provided by early termination and power allocation strategy for multiple users distributed across multiple cells. Our innovation lies in the fact that, unlike traditional multi-user variable-length codes where all users stop simultaneously, we consider varying stopping times for different users. Once a user stops transmission, interference cancellation can further enhance the SINR for other users, thereby increasing the system throughput. Regarding power allocation, we find that within a multi-user framework, each user should adopt a more assertive power allocation strategy. This entails elevating the water-filling threshold relative to a single-user environment, therefore minimizing the interference to other users by transmissions conducted at low SNR levels.

Our main contributions and findings are summarized as follows:
\begin{enumerate}
		\item  Given that users are distributed across distinct cells, implementing joint decoding or SIC is not feasible. Consequently, users within every cell treat messages originating from other cells as interference. We prove the advantages of variable-length feedback codes within this context. These codes not only mitigate the finite-length loss, scaling it down from $O(\sqrt{\ell})$ to $O(\log \ell)$, as observed in single-user scenarios, but also introduce additional capacity enhancements. The early termination of transmission coupled with interference cancellation contributes to boosting the overall system throughput, underscoring the significant gains on the system level.
		\item We develop a multi-user queuing model. Within this scenario, during periods when all cells are experiencing congestion, interference reaches its peak level. Under such circumstances, the early termination of transmissions does not facilitate interference cancellation, but only obtains the feedback gain. Conversely, when the queue exhibits sufficient sparsity, both interference cancellation gain and feedback gain can be obtained. Our findings highlight that throughput enhancements is maximized when user density falls within a moderate range. This reveals the interplay between user density, interference management, and system throughput in multi-user scenarios.
		\item The intrinsic gain from early stopping fundamentally stems from the disparity in stopping times among users. In fading channel conditions, even when all users transmit the same number of bits at identical power levels, variations in channel fading coefficients result in different perceived SNRs, leading to varying decoding times. We also demonstrate under fading channel scenarios that early stopping concurrently yields both feedback and interference cancellation gains. This shows the robustness of coded water-filling approach.
		\item Finally, we address power allocation in fast fading channel. In a MAC scenario, transmission privileges are granted to the user with the most favorable channel conditions. This requires every user to have knowledge of all other users' channel fading coefficients. However, achieving this in a multi-cell environment is impractical due to the excessive channel state information exchange across cells. To circumvent this, we formulate an optimization model where users can independently determine their power allocation solely based on their own channel fading coefficients. Our findings reveal that, by taking into account interference cancellation among multiple users, elevating the power allocation threshold contributes to interference reduction, therefore enhancing the overall system capacity.
	\end{enumerate}
	
	\subsection{Organizations}
	
The remainder of this paper is organized as follows. Section \ref{back} revisits the concepts of variable-length feedback codes and water-filling solutions, providing a comprehensive overview for both single-user environments and MAC. Section \ref{sec:Ssmu} introduces a multi-cell VLSF model, illustrating the dual benefits of feedback gain and interference cancellation. We further extend this model to a queuing context. In Section \ref{sec:fading}, we provide robust evidence that in fading channel conditions, the gains derived from feedback and interference cancellation remain steadfast, even when all users share symmetric characteristics. Section \ref{sec:waterfilling} introduces an innovative water-filling methodology specifically designed for multi-user scenarios, where power allocation decisions are based exclusively on each user's individual channel fading coefficients. Finally, Section \ref{sec:conclusion} summarizes the key findings of our study.
	
\section{Background}
\label{back}
\subsection{Variable-Length Feedback Codes}

In this paper, we consider memoryless channels characterized by input and output alphabets $\MA$ and $\MB$, respectively. $X = (X_{1},\dots,X_{n},\dots)$ and $\bar{X} = (\bar{X}_1,\dots,\bar{X}_n,\dots)$ denote the true and false codewords, both of which are independent and identically distributed. $Y = (Y_1,\dots,Y_n,\dots)$ represents the channel output. Given that the codeword length is variable, the transmission and reception sequences may extend to infinite lengths, accommodating the flexibility inherent in variable-length coding schemes.

\begin{definition} \cite{poly2,gassian1}
An $(\ell, M, P, \varepsilon)$ VLSF code, where $\ell, P$ are positive real represent average length and power constrain respectively, $M$ is a positive integer of the number of message and error probability $\varepsilon$, is defined by:

1. A finite space $\mathcal{U}$ and a probability distribution $P_U$ on it, defining a random variable $U$ which is revealed to both transmitter and receiver before the start of transmission; i.e. $U$ acts as common randomness used to initialize the encoder and the decoder before the start of transmission.

2. A sequence of encoders $f_{n} : \MU \times \{1, \dots ,M\} \times \MA^{n-1} \to \MB, n \geq 1$, defining channel inputs
\begin{equation}
X_{n} = f_{n}(U, W),
\end{equation}
where $W \in \{1, \dots ,M\}$ is the equiprobable message.

3. A sequence of decoders $g_{n} : \MU \times \MB^n \to \{1, \dots ,M\}$ providing the best estimate of $W$ at time $n$.

4. A non-negative integer-valued random variables $\tau$, a stopping time of the filtration $\mathcal{G}_n = \sigma\{U, Y_{1}, \dots ,  Y_{n}\}$, which satisfies
\begin{equation}
\EE[\tau] \leq \ell.
\end{equation}

5. The expected power constraints at the encoders
\begin{equation}
\sum_{n = 1}^{\infty} \EE[X_n^2] \leq \EE[\tau] P.
\end{equation}

The final decision $\hat{W}$ is computed at the time instant $\tau$:
\begin{equation}
\hat{W} = g_{\tau}(U, Y^{\tau})
\end{equation}
and must satisfy
\begin{equation}
\PP[\hat{W}\neq W] \leq \varepsilon.
\end{equation}
\end{definition}
The finite length achievability coding rate for a VLSF code can be approximated by:
\begin{equation}
\log M_f^*(\ell,\epsilon) = \frac{\ell C}{(1-\epsilon)} + O( \log \ell)
\end{equation}
	
What's more, a general achievability bound of single user model is proposed by \cite{poly2}
\begin{equation}
\varepsilon \le (M-1) Pr\left[\bar{\tau}\le \tau \right]
\end{equation}

where $\bar{\tau}=\inf\{n\ge 0:i(\bar{X}^n;Y^n)\ge \gamma\}$, $\tau=\inf\{n\ge 0:i(X^n;Y^n)\ge \gamma\}$ and $i(x^n;y^n)=log \frac{dP_{Y^n|X^n}(y^n|x^n)}{dP_{Y^n}(y^n)}$.

\subsection{Water-Filling Solution}
In \cite{water2}, the Lagrange multiplier method was used to solve the optimization problem to maximize the mutual information for $S$ parallel Gaussian channels: $Y_s=X_s+Z_s$, $s=1,\dots,S$ with $Z_s \sim \mathcal{N}(0,N_s)$ and $\mathop{\sum}_{s=1}^S X_s^2 \le P$, then the maximum capacity and the optimal power allocation are
	\begin{equation}
		C=\mathop{\sum}_{s=1} ^S\frac{1}{2}\log \left(1+\frac{(v-N_s)^+}{N_s}\right).
	\end{equation}
	where $v$ satisfies $\sum(v-N_s)^+=P$.

	\cite{constant1} considers a discrete-time fading channel model for single user:
	\begin{equation}
		Y_t = h_t \cdot X_t +Z_t,
	\end{equation}
	where $t$ is the discrete-time index, $X_t$ and $Y_t$ are the input and output signals respectively, $Z_t$ is the additive white Gaussian noise with distribution $\mathcal{CN}(0, N)$, indicating that its real and imaginary components are independent and identically distributed (i.i.d.) Gaussian random variables with zero mean and variance $N/2$ each. The time-varying channel fading coefficient is represented by $h_t$. Under the assumption that $h_t$ is perfectly known to both the transmitter and the receiver, and with the noise power normalized to $N=1$, the received signal-to-noise ratio (SNR) at any given time $t$ is calculated as $\gamma_t = |h_t|^2$.

 By viewing the SNR fluctuations induced by random channel fading coefficients as a parallel Gaussian channel, the maximum capacity is calculated as
	\begin{equation}\label{capacity_fading}
		C(P) =  \max \limits_{\int_\gamma P(\gamma)p(\gamma)d \gamma =P} \int_{\gamma} \log (1+P(\gamma)\gamma)p(\gamma) d\gamma.
	\end{equation}
Where $P$ signifies the average power constraint, $P(\gamma)$ represents the power allocated at a given received fading gain $\gamma$, and $p(\gamma)$ denotes the probability distribution of the received fading gain. The authors substantiated both the achievability and converse components of the coding theorem by leveraging the finite division of channel fading statistics. The power allocation strategy aimed at maximizing the expression (\ref{capacity_fading}) is shown as follows:

	\begin{equation}\label{PA_fading}
		P(\gamma)=\left\{ \begin{aligned}
			& \frac{1}{\lambda_0}-\frac{1}{\gamma},& \gamma\ge \lambda_0\\
			& 0, & \gamma<\lambda_0\\
		\end{aligned},\right.
	\end{equation}
where $\lambda_0$ is a certain threshold. Substituting (\ref{PA_fading}) into  (\ref{capacity_fading}), we obtain $\lambda_0$ satisfies
	\begin{equation}
\int_{\lambda_0}^\infty \left(\frac{1}{\lambda_0}-\frac{1}{\gamma}\right)p(\gamma)d\gamma = P,
	\end{equation}
and the channel capacity is
	\begin{equation}\label{capacity_fading1}
		C(P) =  \int_{\lambda_0}^\infty \log (\frac{\gamma}{\lambda_0})p(\gamma) d\gamma.
	\end{equation}

	Since the calculation of the optimal water-filling method is complicated, an approximate water-filling method with constant power allocation is proposed in \cite{constant2}. Using the lower bound of the dual optimization problem, the authors prove that the following constant power allocation is very close to the optimal solution:
	
	\begin{equation}
	P(\gamma)=\left\{ \begin{aligned}
		& P_0,& \gamma\ge \lambda_0\\
		& 0, & \gamma<\lambda_0\\
	\end{aligned},\right.
    \end{equation}
if not too few subchannels are used: $\min\limits_\gamma \{P(\gamma)+\frac{1}{\gamma}\} = P_0 +\min\limits_{\gamma} \{\frac{1}{\gamma}\}$. They also give a low complexity algorithm for realizing the smallest duality gap to optimal solution. And simulation shows that the approximate water-filling method is very close to the optimal water-filling method.
	
In \cite{constant3}, water-filling was extend to MAC scenario
	\begin{equation}
		Y_t = \sum_{k=1}^K h_{t,k} \cdot X_{t,k} +Z_t.
	\end{equation}
 The power constraint is that each user has an average power of $P$ as

\begin{align}
\notag
\int_{\gamma_1}\cdots \int_{\gamma_K} P_k(\gamma_1,\dots,\gamma_K)p(\gamma_1,\dots,\gamma_K)& d\gamma_1 \cdots d\gamma_K =P, \\
& 1 \leq k \leq K,
\end{align}
where $\gamma_k$ is the received fading gain of user-$k$, $P_k(\gamma_1,\dots,\gamma_K)$ is the allocated power of user-$k$.

The objective is to maximize the sum capacities
\begin{align}
\notag
C_K(P) = \int_{\gamma_1}\cdots \int_{\gamma_K} & \log \left(1+\sum_{k=1}^K \gamma_k P_k(\gamma_1,\dots,\gamma_K)\right) \cdot \\
& p(\gamma_1,\dots,\gamma_K)d\gamma_1 \cdots d\gamma_K.
\end{align}
The power allocation strategy of user-$k$ for maximizing the sum capacities is as follows:
	\begin{equation}
		P_k(\gamma_1,\dots,\gamma_K)=\left\{ \begin{aligned}
			& \frac{1}{\lambda_k}-\frac{1}{\gamma_k},& \gamma_k\ge \lambda_k,\gamma_k\ge \frac{\lambda_k}{\lambda_j}\gamma_j,j\ne k\\
			& 0, & \text{otherwise}\\
		\end{aligned},\right.
	\end{equation}
where $\lambda_k$ is the threshold of user-$k$. Assuming symmetry among the users, we have $\lambda_i=\lambda_j$. The condition $\gamma_k > \gamma_j, \forall j \ne k$ signifies that, at any given instant, the exclusive entity permitted to transmit is the user boasting the highest instantaneous power. Meanwhile, all other users are mandated to maintain silence, awaiting their turn until one among them ascends to the position of possessing the greatest power.

\section{Multi-User Interference Cancellation in AWGN Channel}\label{sec:Ssmu}

In this part, we consider interference cancellation for multiple users distributed across multiple cells. For simplicity, we assume that there is only one active user per-cell at any given time, and all users transmit their messages synchronously at the same power level, treating transmissions from users in other cells as interference.

Now, let us introduce the mathematical framework. Assume there are $S$ users distributed at $S$ cells. The $s$-$th$ user is going to transmit a message taking values in $\{1,\dots, M_s=2^{K_s}\}$ uniformly. Without loss of generality, we always assume that $K_s = \alpha_s K$ where $0<\alpha_1<\dots<\alpha_S$, and $K$ tends to infinity. This representation encapsulates the fact that the volume of information (measured in bits) transmitted by each user is different. Therefore, it is instructive to establish a decoding order that aligns with the varying quantities of information transmitted by the users. Without loss of generality, we can assume that decoding starts from the $1$-$st$ user, who has the least number of bits to transmit. Successively, the remaining users are decoded in an ascending order of information bit sequence length, leading up to the completion of decoding for the $S$-$th$ user, who transmits the largest amount of information.

The $s$-$th$ user employs an $(\ell_s, M_s, P, \varepsilon)$ VLSF code for transmission. The channel model for the $s$-$th$ cell can be described as follows:
\begin{equation}
\label{eq:channel}
Y^s_n = X^s_n + \sum_{i\neq s} X^i_n + Z^s_n,
\end{equation}
where $X^s_n$ signifies the $n$-th transmitted symbol from the $s$-th user, subject to the power constraint $E(X^s_n)^2 \leq P$, ensuring that the average power of the transmitted symbols does not exceed the specified limit $P$. When the $i$-th user decides to stop transmission at time $\tau_i$, it is implied that subsequent symbols $X^i_{\tau_i+1},X^i_{\tau_i+2}\dots$ are all zeros, reflecting the termination of transmission. The noise component $Z^s_n \sim \mathcal{N}(0,1)$ denotes \emph{i.i.d.} additive Gaussian noise, characterized by a zero mean and unit variance. $Y^s_n$ is the $n$-th symbol detected by the receiver-$s$. When there are $s$ users concurrently transmitting messages, the SINR is calculated as $P_s = \frac{P}{1+(s-1)P}$, taking into account the interference from the other $s-1$ active users. And the corresponding channel capacity is $C_s = \frac{1}{2} \log(1+P_s)$.

For fixed-length codes, all users stop simultaneously, hence the interference intensity is always at its maximum; for variable-length codes, users stop immediately upon successful transmissions, thereby automatically mitigating interference to other users whose transmissions have not yet completed. The following Fig.\ref{fig1.11} illustrates the distinction between fixed-length codes and variable-length codes. It can be observed that, under multi-user scenarios, the prompt early termination of variable-length codes enables interference cancellation, thereby enhancing the overall system capacity.

\begin{figure*}[htbp]
\centerline{\includegraphics[width = 0.9\textwidth,trim=120 150 120 150,clip]{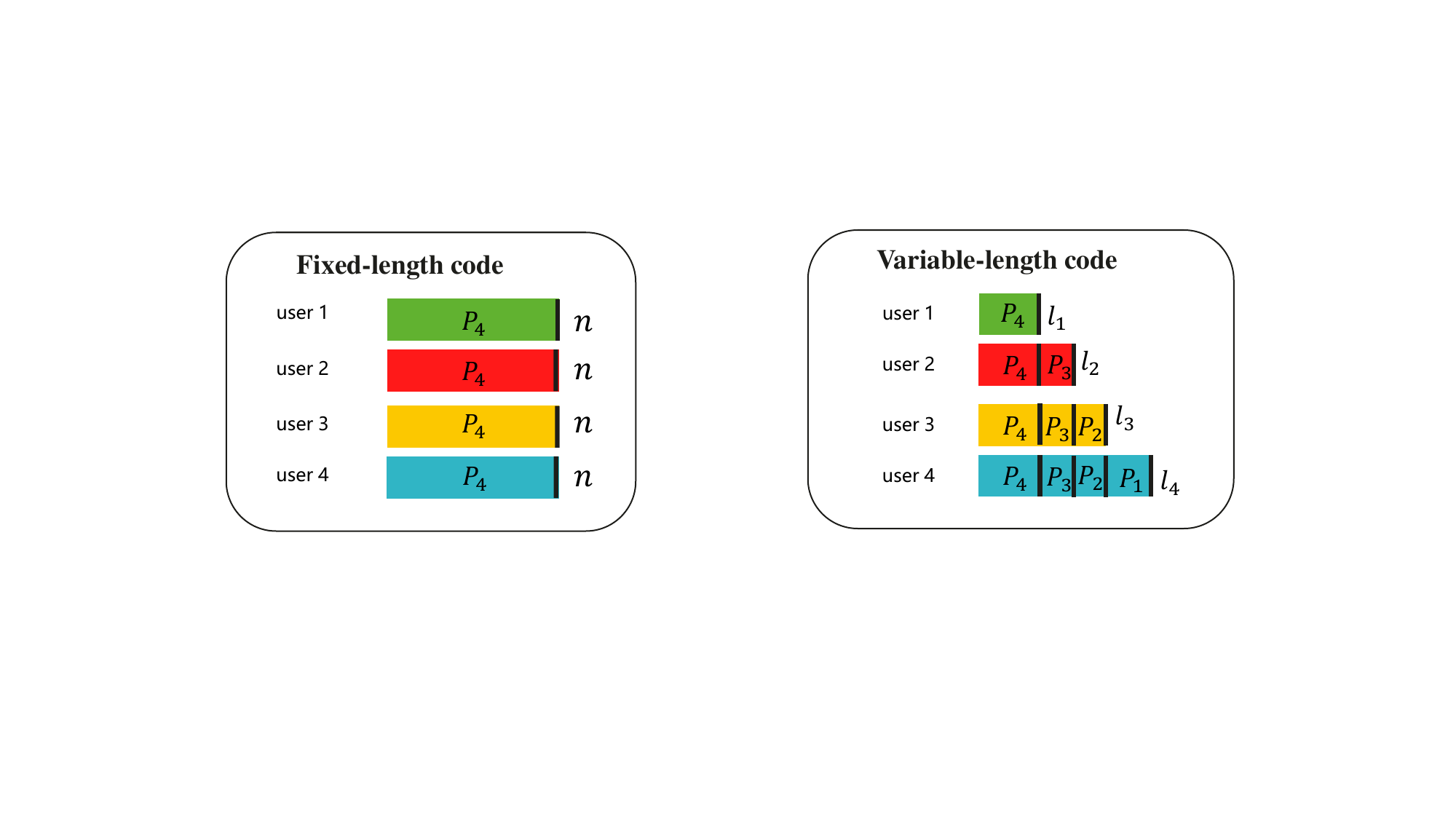}}
\caption{Comparison of fixed-length codes and variable-length codes in multi-user interference cancellation scenarios.}
\label{fig1.11}
\end{figure*}

Initially, all $S$ users transmit their message concurrently. the $s$-$th$ receiver ascertains the termination time $\tau_s$ for $s$-$th$ user's transmission based on the received information, prompting users to stop their message delivery upon successful decoding. In scenarios where $s$ users are actively transmitting, the information density function, denoted as $i_s(X;Y)$, represents the relationship between the transmitted and received signals. Subsequently, we proceed to analyze the average transmission length $\ell_s$ for the $s$-th user.

\begin{theorem}\label{Thm1.1}
There exists a series of  $(\ell_s, M_s=2^{K_s}, P, \varepsilon)$  VLSF codes satisfying
\begin{equation}
\ell_s = (1  -\varepsilon) \sum_{t=1}^{s} \frac{ K_t -  K_{t-1}}{C_{S-t+1}} + O(\log K).
\end{equation}
\end{theorem}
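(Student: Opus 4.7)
The plan is to extend the classical single-user VLSF achievability argument to a multi-phase setting where the effective channel capacity seen by each user increases as earlier-completing users stop transmitting. The analysis is by induction on $s$, using concentration of information density to show that the random stopping times $\tau_t$ concentrate on deterministic targets $T_t$ defined below.

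For the construction, each user $s$ draws its codebook i.i.d.\ from $\mathcal{N}(0,P)$, and receiver $s$ uses the stopping rule $\tau_s = \inf\{n : I_s(n) \ge \gamma_s\}$ with $\gamma_s = K_s + O(\log K)$ chosen so that the achievability bound $\varepsilon \le (M_s-1)\Pr[\bar\tau_s \le \tau_s]$ cited at the end of Section~\ref{back} delivers the target error probability. Here $I_s(n) = \sum_{j=1}^{n} i_s(X^s_j; Y^s_j \mid k(j))$ is the accumulated information density, computed under the correct likelihood model in which only the $k(j)-1$ currently-active other cells contribute as Gaussian interferers, so its per-symbol conditional mean equals $C_{k(j)}$. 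As in the classical VLSF argument, this rule is combined with a deterministic give-up horizon that truncates transmission on the error event, producing the familiar $(1-\varepsilon)$ prefactor in $E[\tau_s]$.

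Base case: because $K_1$ is minimal, all $S$ users remain active throughout user~1's transmission, so user~1 sees a stationary AWGN channel with capacity $C_S$, and the scalar VLSF result gives $\ell_1 = (1-\varepsilon) K_1/C_S + O(\log K)$, matching the $t=1$ term. Inductive step: let $T_t = \sum_{r=1}^{t}(K_r - K_{r-1})/C_{S-r+1}$ be the conditional-on-success target, and assume $\tau_t = T_t + O(\log K)$ with probability $1 - O(K^{-c})$ for every $t<s$; call this event $\mathcal{E}_{s-1}$. On $\mathcal{E}_{s-1}$ the active-user count $k(n)$ equals $S-t+1$ on each phase $(T_{t-1}, T_t]$, up to an $O(\log K)$ boundary fluctuation. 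Applying optional stopping to the martingale $I_s(n) - \sum_{j \le n} C_{k(j)}$ shows that, conditional on successful decoding, $\sum_{t=1}^{s}(T_t - T_{t-1}) C_{S-t+1} = K_s + O(\log K)$; telescoping against the inductive identity at $s-1$ isolates $T_s - T_{s-1} = (K_s - K_{s-1})/C_{S-s+1} + O(\log K)$, and the give-up truncation then scales $E[\tau_s] = (1-\varepsilon) T_s + O(\log K)$ into the theorem's form.

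The main obstacle is that the stopping times at different receivers are coupled through the shared channel realizations, so establishing the joint concentration of $(\tau_1,\dots,\tau_s)$ around $(T_1,\dots,T_s)$ requires more than an off-the-shelf single-user bound. I would handle this via a simultaneous martingale concentration inequality (Azuma or Bernstein) applied to each $I_t$ in each phase, combined with a union bound over users and phases, giving $\Pr[\max_t |\tau_t - T_t| > c\log K] = O(K^{-c'})$. On the rare complementary event, truncating $\tau_s$ at an $O(K)$ deterministic horizon contributes only $o(1)$ excess expected length, preserving the claimed $O(\log K)$ additive error and completing the induction.
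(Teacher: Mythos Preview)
Your high-level plan has the right architecture, but the concentration step you rely on fails, and this is exactly the point where the paper's proof differs from yours.

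\textbf{The concentration claim is too strong.} You assert that $\Pr[\max_t |\tau_t - T_t| > c\log K] = O(K^{-c'})$ via Azuma/Bernstein. But $I_t(n)$ is a random walk whose target level is $\gamma_t = \Theta(K)$; standard renewal/CLT reasoning gives $\mathrm{Var}(\tau_t) = \Theta(K)$, so the stopping times fluctuate on the scale $\sqrt{K}$, not $\log K$. Concretely, Azuma applied to the martingale $I_t(n)-\sum_{j\le n}C_{k(j)}$ with bounded increments and $n=\Theta(K)$ yields deviations of order $\sqrt{K\log K}$ for failure probability $K^{-c'}$. Hence your event $\mathcal{E}_{s-1}$ has probability tending to $0$, not to $1$, and the phase-by-phase accounting conditioned on $\mathcal{E}_{s-1}$ cannot deliver an $O(\log K)$ remainder; at best you would recover $O(\sqrt{K\log K})$, which is weaker than the theorem.

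\textbf{How the paper avoids this.} The paper never conditions on concentration of the $\tau_t$. It works entirely in expectation: the overshoot $S^{1,s}_{\tau^{1,s}}-\gamma^s$ has $O(1)$ expectation regardless of the $\sqrt{K}$ variability of $\tau^{1,s}$, and Wald's identity (applied phase-by-phase, using that future information-density increments are independent of past stopping times) converts this into an exact linear recursion for $\EE[\tau^s]$ in terms of $\EE[\tau^1],\dots,\EE[\tau^{s-1}]$. Solving that recursion by induction gives $\EE[\tau^s]=\sum_{t\le s}(\gamma^t-\gamma^{t-1})/C_{S-t+1}+O(1)$ directly. Your optional-stopping martingale is the right object, but you should take its unconditional expectation, not condition on a rare event.

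\textbf{The $(1-\varepsilon)$ factor.} A deterministic give-up horizon does not by itself produce the $(1-\varepsilon)$ prefactor. In the paper (and in the standard VLSF argument) one first builds a code with error $O(1/K)$ and expected length $\ell'_s$, then uses the common randomness $U$ to transmit nothing with probability roughly $\varepsilon$; this yields error $\le\varepsilon$ and expected length $(1-\varepsilon)\ell'_s+o(1)$. Your description conflates the truncation (which controls length on the rare bad event) with this randomization step.
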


\begin{proof}
\textbf{Encoding:}
For the $s$-th user, denote $\bm{C}^{w,s} = (C^{w,s}_1, C^{w,s}_2,\dots)\in\RR^{\infty}$ for $w\in \{1,\dots, M_s\}$ as the codeword designated for transmitting the $w$-th message. This infinitely long codeword sequence encapsulates the rateless property of variable-length codes, reflecting their adaptability to the specific requirements of the transmission. We employ a random coding technique, whereby the codeword elements $C^{w,j}_n$ are modeled as \emph{i.i.d.} random variables drawn from a Gaussian distribution with zero mean and variance $P$. Let $m$ represent the message to be transmitted. The encoder maps $m$ to the sequence $f_{n,s}(m) \triangleq X_n^s = C^{m,s}_n$, where $C^{m,s}_n$ corresponds to the $n$-$th$ symbol of the codeword. The symbols $X^s_1 ,\dots,X^s_n,\dots$, are conveyed to the receiver through the channel model (\ref{eq:channel}).

\textbf{Decoding:}
Upon receipt of the corresponding symbols $\bm{Y}^{s,n} = (Y^s_1 ,\dots,Y^s_n)$, the decoder computes the accumulated information density
\begin{equation}
S^{w,s}_{n} \triangleq i((C^{w,s}_1,\dots,C^{w,s}_n); \bm{Y}^{s,n}), 1 \leq w \leq M_s.
\end{equation}
Let $\tau^{w,s} \triangleq \inf\{n\geq 0: S^{w,s}_{n} > \gamma^s\}$
be the moment when the accumulated information density for the $w$-$th$ message first exceeds the threshold $\gamma^s$. Define the stopping time $\tau^s = \min_{w} \tau^{w,s}$. The final decision is $\hat{m} = \max\{w: \tau^{w,s} =  \tau^{s}\}$. The aforementioned decoding process signifies that the decoder will output the message whose accumulated information density first reaches the threshold  $\gamma^s$ as the decoding output. This mechanism is fundamental in scenarios employing variable-length coding schemes, where messages are transmitted until the decoder gathers sufficient information to confidently reconstruct a message. By outputting the message that first meets the threshold criterion, the system ensures that the decoding process is both efficient and reliable, striking a balance between minimizing transmission time and maintaining data integrity.

Partition $[0,\tau^s]$ into distinct intervals $\tau^0=0<\tau^1<\dots<\tau^s$, during the time interval $[\tau^{S-t}, \tau^{S-t+1}]$, there are exactly $t$ active users. Therefore,
\begin{align}
\notag
S^{w,s}_{n}  & = \sum_{i=\tau^0+1}^{\tau^1} i_{S}(C_i^{w,s};Y^s_i) + \sum_{i=\tau^1+1}^{\tau^2} i_{S-1}(C_i^{w,s};Y^s_i) + \dots \\
& +  \sum_{i=\tau^{s-1}+1}^{n} i_{S-s+1}(C_i^{w,s};Y^s_i).
\end{align}

\textbf{Error probability:}

Without loss of generality, assume the $1$-$st$ message is transmitted, due to the symmetry introduced by random coding, the codewords of wrong messages ${2,\dots,M_s}$ exhibit identical distribution. From \cite[(111)-(118)]{poly1}, the average error probability is
\begin{equation}\label{eq1.3}
\PP[\bigcup\limits_{w \geq 2} \{\tau^{w,s} = \tau^s\}] \leq (M_s-1)\PP[\tau^{2,s} < \infty] < M_s e^{-\gamma^s} = \frac{1}{ K_s},
\end{equation}
where we choose $\gamma^s=K_s-\log K_s$.

\textbf{Average length:}
For convenience, denote $ \gamma_0 = 0, M_0 = 1$. Since $\tau^1, \dots, \tau^{t-1}$ are independent with $i_{S-t+1}(C_i^{1,s};Y^s_i), i >\tau^{t-1}$,  from Wald's identity and \cite[Lemma 1]{gassian1},
\begin{align}
\notag
\gamma^s+ O(1) & = \EE[S^{1,s}_{\tau^{1,s}}] = \EE[\tau^1] C_S + \EE[\tau^2 - \tau^1] C_{S-1} + \dots \\  
& +\EE[\tau^{1,s} - \tau^{s-1}]  C_{S-s+1}.
\end{align}

Therefore,
\begin{align}\label{eq1.2}
\notag
\EE[\tau^{s}] & \leq \EE[\tau^{1,s}] \\
& = \frac{1}{C_{S-s+1}} (\gamma^s + \sum_{t=1}^{s-1} \EE[\tau^t] (C_{S-t} -   C_{S-t+1})) + O(1).
\end{align}

Note that $\EE[\tau^1] \leq \EE[\tau^{1,1}]=\frac{\gamma^1}{C_S} + O(1)$.  Assume  $\EE[\tau^k] = \sum_{t=1}^k \frac{\gamma^t - \gamma^{t-1}}{C_{S-t+1}} + O(1)$ for $1\leq k\leq s-1$. Then
\begin{align}
\notag
\EE[\tau^s] & \leq \EE[\tau^{1,s}] \\ 
& = \frac{1}{C_{S-s+1}}(\gamma^s + \sum_{k=1}^{s-1} \EE[\tau^k] (C_{S-k} -   C_{S-k+1}) + O(1) ) \\
\notag
& = \frac{1}{C_{S-s+1}}(\gamma^s + \sum_{k=1}^{s-1} \sum_{t=1}^{k} \frac{\gamma^t - \gamma^{t-1}}{C_{S-t+1}} (C_{S-k} -   C_{S-k+1}) ) \\
& + O(1)  \\
\notag
& = \frac{1}{C_{S-s+1}}(\gamma^s + \sum_{t=1}^{s-1} \frac{\gamma^t - \gamma^{t-1}}{C_{S-t+1}} \sum_{k=t}^{s-1} (C_{S-k} -   C_{S-k+1}) ) \\
& + O(1)  \\
& = \frac{1}{C_{S-s+1}}(\gamma^s + \sum_{t=1}^{s-1} \frac{\gamma^t - \gamma^{t-1}}{C_{S-t+1}} (C_{S-s+1} -   C_{S-t+1}) ) \\
\notag
& + O(1)  \\
& = \frac{\gamma^s}{C_{S-s+1}} + \sum_{t=1}^{s-1} \frac{\gamma^t - \gamma^{t-1}}{C_{S-t+1}} - \sum_{t=1}^{s-1} \frac{\gamma^t - \gamma^{t-1}}{C_{S-s+1}} + O(1)  \\
& = \sum_{t=1}^{s} \frac{\gamma^t - \gamma^{t-1}}{C_{S-t+1}} + O(1).
\end{align}

Recall that $\gamma^s =  K_s - \log  K_s$. Thus,
\begin{equation}\label{eq1.4}
\EE[\tau^s] = \sum_{t=1}^{s} \frac{ K_t -  K_{t-1}}{C_{S-t+1}} + O(\log K).
\end{equation}

Hence, there exists a series of $(\ell'_s, M_s, \frac{1}{ K_s})$ VLSF code satisfying
\begin{equation}
\ell'_s = \sum_{t=1}^{s} \frac{ K_t -  K_{t-1}}{C_{S-t+1}} + O(\log K).
\end{equation}

Now, let's consider a scenario where the user transmits a codeword from an $(\ell'_s, M_s, \frac{1}{ K_s})$ code with a probability of $\frac{ (1-\varepsilon) K_s}{ K_s - 1}$, and refrains from transmitting anything in the complementary case. This particular scheme ensures an error probability that is bounded above by $\varepsilon$ and yields an average codeword length of $\ell_s = \frac{ (1-\varepsilon) K_s}{ K_s - 1} \ell_s'$. We have
\begin{equation}
\ell_s = (1  -\varepsilon) \sum_{t=1}^{s} \frac{ K_t -  K_{t-1}}{C_{S-t+1}} + O(\log K).
\end{equation}
\end{proof}

\begin{remark}
\begin{align}
\ell_s & =  (1  -\varepsilon) \sum_{t=1}^{s} \frac{ K_t -  K_{t-1}}{C_{S-t+1}} + O(\log K) \\
& <  (1  -\varepsilon) \sum_{t=1}^{s}  \frac{ K_t -  K_{t-1}}{C_{S}} + O(\log K)  \\
& =  (1  -\varepsilon) \frac{K_s}{C_S} + O(\log K).
\end{align}

Hence, $\log M_s > \frac{C_S}{1  -\varepsilon}\ell_s + O(\log \ell_s)$, the inequality represents the relationship between the average length of variable-length codes and the number of message bits, without taking into account the benefits of interference cancellation. Consequently, in multi-user scenarios, early termination enables the simultaneous realization of gains from both feedback and interference cancellation.
\end{remark}

In Fig.\ref{fig1.1}, we present a comparison of the codeword lengths between fixed-length codes and variable-length codes. The length of the variable-length code is derived from Theorem \ref{Thm1.1}. For fixed-length codes, the codeword lengths for all users are identical; thus, the length must be sufficiently large to ensure that all users can decode their messages with a small error probability. This necessitates that the codeword length $n$ for fixed-length codes satisfies the condition $\log M_S \leq nC_S - \sqrt{nV_S}Q^{-1}(\epsilon)$, where $C_S$ and $V_S$ are the channel capacity and channel dispersion with the SINR $P_S$, respectively. The figure vividly demonstrates that the average length of the variable-length code is significantly shorter than that of the fixed-length code. Moreover, the disparity in average codeword length savings becomes more pronounced as the difference in the number of messages to be transmitted among users increases. This phenomenon attributes to the ability of variable-length codes to enable rapid early termination for users with shorter messages, which in turn facilitates interference cancellation. This reduction in interference leads to higher capacity for subsequent users, allowing for more efficient communication and higher system throughput.

\begin{figure}[h]
	\begin{center}
		\subfigure[\scriptsize $K_1$=300, $K_2$=1000.]{\includegraphics[width=0.4\textwidth, clip, trim= 100 265 120 265,clip]{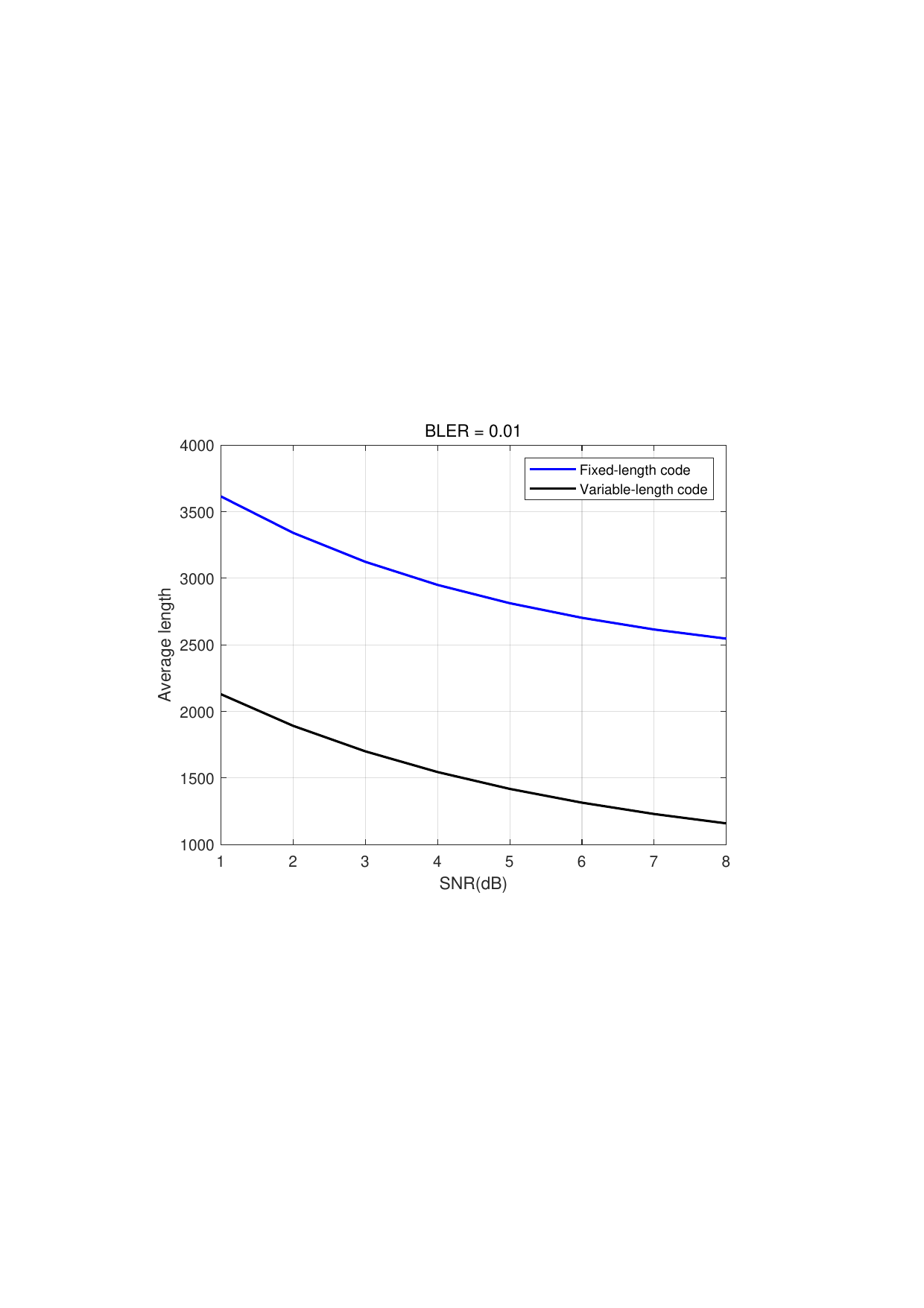}}
		\subfigure[\scriptsize $K_1$=600, $K_2$=1000.]{\includegraphics[width=0.4\textwidth, clip, trim= 100 265 120 265,clip]{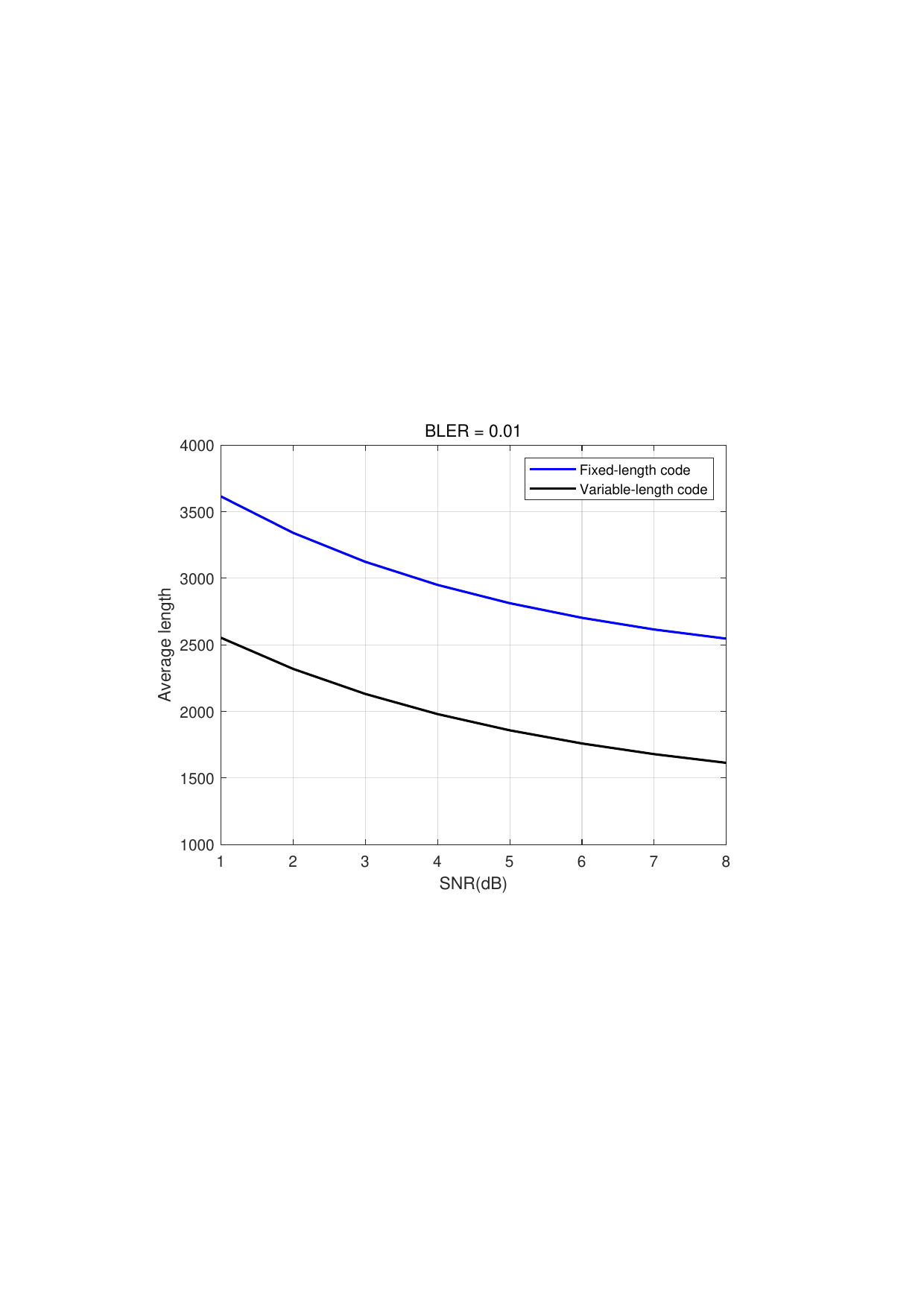}}
	\end{center}
	\caption{Comparison of code lengths between fixed-length codes and variable-length codes.}\label{fig1.1}
\end{figure}

Now, we extend Theorem \ref{Thm1.1} to a queuing model. A new transmission request arrives at fixed time intervals $T_{sub}$, and within the $s$-th cell, $(\ell_s, M_s, P, \varepsilon)$ VLSF codes are employed to convey messages. If the decoding of the preceding message fails to conclude within the $T_{sub}$ time frame, the subsequent message must wait, leading to a situation akin to queuing or congestion. This queuing model introduces a practical constraint on the transmission process, reflecting real-world scenarios where there is a continuous stream of data packets requiring transmission. A simplified illustration of the queuing model is shown in Fig.\ref{fig 21}. Subsequently, we proceed to analyze the average code length within the queueing model.

\begin{figure}[htbp]
\centerline{\includegraphics[width = 0.5\textwidth,trim=310 90 200 100,clip]{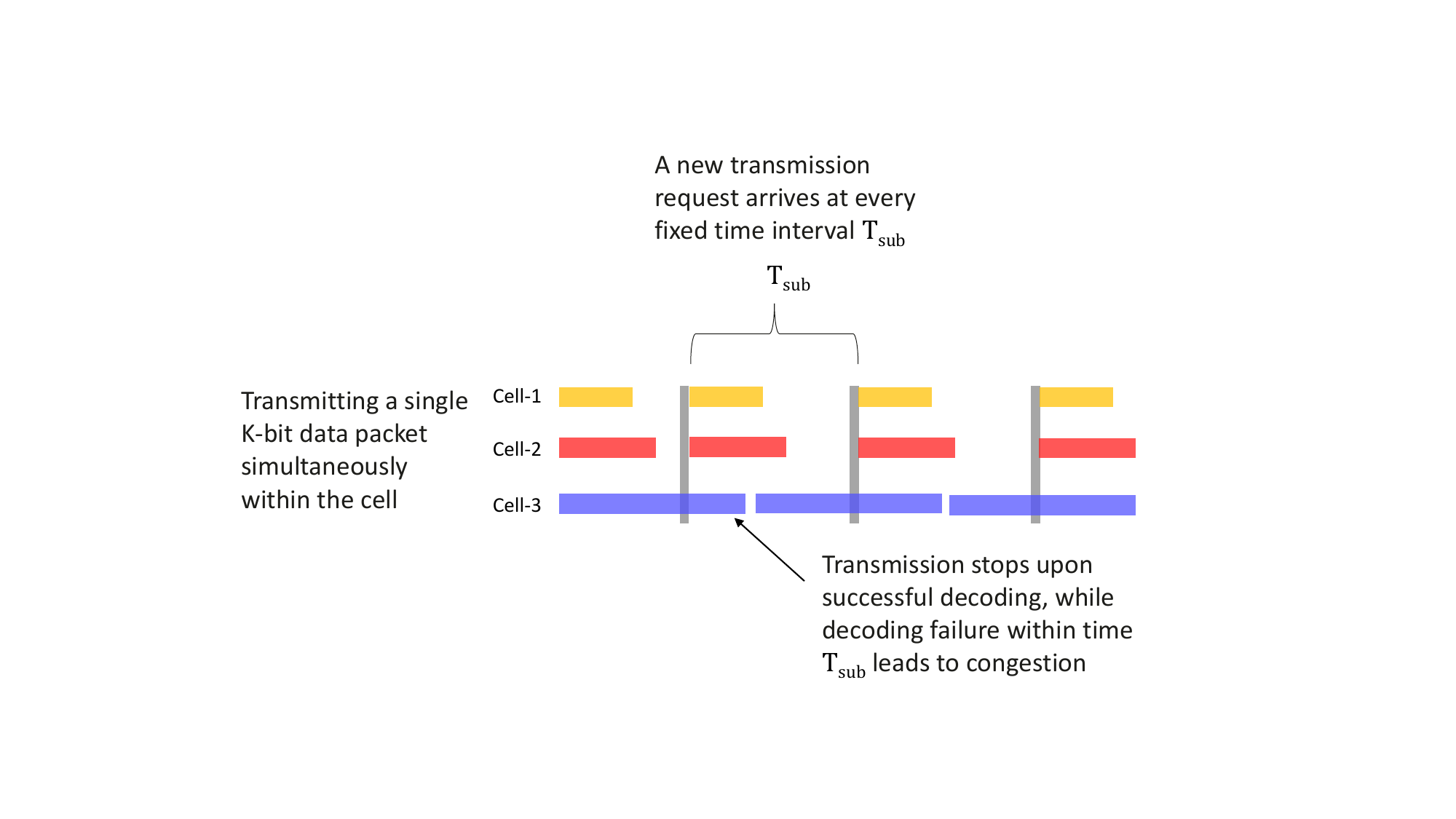}}
\caption{A illustration of the queuing model.}
\label{fig 21}
\end{figure}

\begin{theorem}\label{Thm1.2}
Define $I \triangleq  T_{sub}C_{S-r} - \sum_{t=1}^{r} \frac{C_{S-r+1}}{C_{S-t+1}}(K_t - K_{t-1}) + K_{r}$ as the average amount of information that can be accumulated during each $T_{sub}$ time interval, and $r$ as the largest index for which the average codeword length $\leq T_{sub}$. For $s > r$, the decoding in the $s$-$th$ cell need multiple intervals, let  $c_s \triangleq  \lfloor \frac{\gamma^s}{I} \rfloor$, $b_s \triangleq \gamma^s - c_s I$ be the number of complete time intervals and the remaining average amount of information required after $c_s$ time intervals, respectively.

 There exists a series of  $(\ell_s, M_s, P, \varepsilon)$ VLSF code satisfying
\begin{equation}\label{eq2.1}
\ell_s = (1  -\varepsilon) \sum_{t=1}^{s} \frac{K_t - K_{t-1}}{C_{S-s+1}} + O(\log K).
\end{equation}
for $j\leq r$, and
\begin{equation}\label{eq2.2}
\ell_s \leq (1  -\varepsilon) (c_s T_{sub} + \frac{b_s-K_{r_s}}{C_{S-r_s}} + \sum_{t=1}^{r_s} \frac{K_t - K_{t-1}}{C_{S-t+1}}) + O(\log K).
\end{equation}
for $j> r$. Where $r_s$ satisfies that the decoding will stop between $\tau^{r_s}$ and $\tau^{r_s+1}$ at the $c_s+1$-$th$ interval.
\end{theorem}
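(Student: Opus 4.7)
The plan is to reduce the queuing analysis to a repeated application of the Wald-identity argument developed in the proof of Theorem \ref{Thm1.1}, window by window. The encoding and decoding constructions, together with the definition of the accumulated information density $S_n^{w,s}$ and the threshold choice $\gamma^s = K_s - \log K_s$, carry over verbatim; only the time axis needs to be re-accounted for the periodic arrival of new requests every $T_{sub}$ seconds.

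For $s \leq r$, by the very definition of $r$ the average codeword length is at most $T_{sub}$, so user $s$ terminates within its initial window and no queuing takes place. The formula (\ref{eq1.4}) of Theorem \ref{Thm1.1} applies without modification, and the probabilistic stopping trick used at the end of that proof delivers the $(1-\varepsilon)$ prefactor in (\ref{eq2.1}).

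For $s > r$, I would analyse the information density accumulated by user $s$ one $T_{sub}$ window at a time. Inside a generic window, users $1,\ldots,r$ restart a fresh transmission and complete it at successive times $\tau^1 < \cdots < \tau^r$; during the sub-interval $[\tau^{t-1},\tau^t]$ exactly $S-t+1$ users are active, since the long users $r+1,\ldots,S$ remain busy throughout. By Wald's identity, as in the derivation of (\ref{eq1.4}), the expected contribution to $S_n^{1,s}$ from this sub-interval is $(\tau^t-\tau^{t-1})C_{S-t+1}=K_t-K_{t-1}$, while the idle tail $[\tau^r,T_{sub}]$ contributes $(T_{sub}-\tau^r)C_{S-r}$. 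Summing these reproduces the quantity $I$. After $c_s$ full windows user $s$ has accumulated roughly $c_s I$ nats of information density; the residual $b_s=\gamma^s-c_s I$ is accrued during the $(c_s+1)$-th window by the same cascaded accounting, and the threshold $\gamma^s$ is first exceeded inside the sub-interval $[\tau^{r_s},\tau^{r_s+1}]$ by the very definition of $r_s$. One more application of Wald's identity inside that final sub-interval gives an additional waiting time $(b_s-K_{r_s})/C_{S-r_s}$, so
\[
\EE[\tau^s] \leq c_s T_{sub} + \sum_{t=1}^{r_s}\frac{K_t - K_{t-1}}{C_{S-t+1}} + \frac{b_s - K_{r_s}}{C_{S-r_s}} + O(\log K).
\]
The error-probability bound (\ref{eq1.3}) is inherited intact since it depends only on $M_s$ and $\gamma^s$, and the $(1-\varepsilon)$ prefactor in (\ref{eq2.2}) follows from the same transmit-with-probability $(1-\varepsilon)K_s/(K_s-1)$ device that closes the proof of Theorem \ref{Thm1.1}.

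The main obstacle I anticipate is controlling how the additive $O(1)$ overshoots at each sub-interval boundary compound across the $c_s$ windows. Because $\gamma^s$ and $I$ are both $\Theta(K)$, the number $c_s$ of full windows is $O(1)$ in $K$, so the accumulated overshoots should stay inside a single $O(\log K)$ envelope; establishing this cleanly requires a concentration estimate on the window-by-window boundaries $\tau^t$, analogous to the second-moment control implicit in the Wald step of Theorem \ref{Thm1.1}. A secondary subtlety is that $r_s$ is itself a random index, determined by the fluctuation of $b_s$ around $\gamma^s - c_s I$; the cleanest way to handle this is to let $r_s$ in (\ref{eq2.2}) denote the deterministic value obtained by replacing $b_s$ by its mean, absorbing the associated boundary fluctuations into the $O(\log K)$ remainder.
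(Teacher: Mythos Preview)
Your proposal is correct and follows essentially the same route as the paper: inherit the encoding, decoding, and error-probability analysis from Theorem~\ref{Thm1.1}, invoke that theorem directly for $s\le r$, and for $s>r$ compute the per-window accumulated information $I$ via Wald's identity under the synchronized-start (worst-case) assumption, then handle the residual $b_s$ in the $(c_s+1)$-th window by one more cascaded Wald step. The paper makes the worst-case synchronization explicit as the justification for the inequality in~(\ref{eq2.2}), and simply absorbs the overshoot and $r_s$-fluctuation issues you flag into the $O(\log K)$ term without further comment.
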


\begin{proof}

The encoding, decoding and the error probability  analysis parts are the same as that in Theorem \ref{Thm1.1}. Let $0\leq r\leq S$ to be the largest integer such that
\begin{equation}
\ell'_r = \sum_{t=1}^{r} \frac{ K_t -  K_{t-1}}{C_{S-t+1}}<T_{sub}.
\end{equation}
 Then the first $r$ cells will not encounter congestion, according to Theorem \ref{Thm1.1}, there exists  a series of $(\ell_s, M_s, \varepsilon)$ VLSF code satisfying (\ref{eq2.1}).

For the $s$-th user where $s>r$, since we focus on establishing an upper bound for $\ell_s$, consider the scenario that represents the most challenging channel conditions. Specifically, we assume that the users initiate message transmission at the very start of a particular time period. It is evident that the average  length will be shorter if transmissions can stop at any other point within the time period.

Firstly, we calculate the average amount of information that can be accumulated during each $T_{sub}$ time interval assuming that all the users initiate message transmission at the very start of a particular time period. Recall that $r$ is the largest index for which the average codeword length $\ell'_r \leq T_{sub}$. In one time period, the accumulated information density is
\begin{align}
\notag
S^{w,s}_{T_{sub}} & = \sum_{i=\tau^0+1}^{\tau^1} i_{S}(C_i^{w,s};Y^s_i) + \sum_{i=\tau^1+1}^{\tau^2} i_{S-1}(C_i^{w,s};Y^s_i) + \dots \\
& +  \sum_{i=\tau^{r}+1}^{T_{sub}} i_{S-r}(C_i^{w,s};Y^s_i).
\end{align}

The average accumulated information is
\begin{align}
I & = \EE[S^{w,s}_{T_{sub}}] \\
& = \sum_{k=1}^{r} \EE[\tau^k - \tau^{k-1}] C_{S-k+1} + (T_{sub} -\EE[ \tau^r]) C_{S-r} \\
& = T_{sub}C_{S-r} + \sum_{k=1}^{r} \EE[\tau^k] (C_{S-k+1} -   C_{S-k})  \\
& = T_{sub}C_{S-r} + \sum_{k=1}^{r} \sum_{t=1}^{k} (\frac{\gamma^t - \gamma^{t-1}}{C_{S-t+1}}+ O(1))(C_{S-k+1} -  C_{S-k})  \\
& = T_{sub}C_{S-r} + \sum_{t=1}^{r} \frac{\gamma^t - \gamma^{t-1}}{C_{S-t+1}} \sum_{k=t}^{r} (C_{S-k+1} -  C_{S-k})  +  O(1) \\
& = T_{sub}C_{S-r} + \sum_{t=1}^{r} \frac{\gamma^t - \gamma^{t-1}}{C_{S-t+1}} (C_{S-t+1} -  C_{S-r})  +  O(1) \\
& =  T_{sub}C_{S-r} - \sum_{t=1}^{r} \frac{C_{S-r+1}}{C_{S-t+1}}(\gamma^t - \gamma^{t-1}) + \gamma_{r} +  O(1) \\
& = T_{sub}C_{S-r} - \sum_{t=1}^{r} \frac{C_{S-r+1}}{C_{S-t+1}}(K_t - K_{t-1}) + K_{r} +  O(\log K)
\end{align}

Let $c_s =  \lfloor \frac{\gamma^s}{I} \rfloor$, $b_s = \gamma^s - c_s I$, the user accumulates $c_sI$ information over the first $c_s$ time intervals, at the $(c_s+1)$-$th$ period, there is at most $b_s$ information to collect. At time $c_s I+\tau^j$, the user collects an average of $\gamma^j$ information at the $(c_s+1)$-$th$ period, hence let $r_s$ be the greatest index such that $\gamma^{r_s} < b_s$, the $s$-$th$ user stops in time interval $[\tau^{r_{s}}, \tau^{r_{s+1}}]$.

Similar as Theorem \ref{Thm1.1},
\begin{align}
\EE[\tau^*_j] & \leq \EE[\tau_j]  \\
\notag
& \leq  c_s T_{sub} + \frac{1}{C_{S-r_s}}( b_s  + \sum_{k=1}^{r_s} \EE[\tau^k] (C_{S-k} -   C_{S-k+1} ))\\
& + O(1) \\
\notag
& \leq   c_s T_{sub} + \frac{b_s}{C_{S-r_s}} + \sum_{t=1}^{r_s} \frac{\gamma_t - \gamma_{t-1}}{C_{S-t+1}} - \sum_{t=1}^{r_s} \frac{\gamma_t - \gamma_{t-1}}{C_{S-r_s}} \\
& + O(1)  \\
& =  c_s T_{sub} + \frac{b_s - \gamma_{r_s}}{C_{S-r_s}} + \sum_{t=1}^{r_s} \frac{\gamma_t - \gamma_{t-1}}{C_{S-t+1}} + O(1) \\
& =  c_s T_{sub} + \frac{b_s-K_{r_s}}{C_{S-r_s}} + \sum_{t=1}^{r_s} \frac{K_t - K_{t-1}}{C_{S-t+1}} + O(\log K).
\end{align}

Therefore, there exists  $(\ell_s, M_s, P, \varepsilon)$ VLSF codes satisfying (\ref{eq2.2}) for $s>r$.

\end{proof}

Fig.\ref{fig23} shows the variation in average codeword length as a function of different time intervals $T_{sub}$. As the interval $T_{sub}$ increases, the figure reveals a reduction in the number of users experiencing congestion, accompanied by a rapid escalation in the average codeword length. To better understand this relationship, we consider the two extremities depicted in Fig.\ref{fig 22}: at one end, when $T_{sub}$ is exceedingly small, congestion becomes a constant phenomenon, and consequently, the advantages of interference cancellation facilitated by early termination are nullified. Conversely, at the other end of the spectrum, when $T_{sub}$ is significantly large, congestion is significantly alleviated, enabling the full exploitation of interference cancellation gains that result from early termination.

\begin{figure}[h]
	\begin{center}
		\subfigure[\scriptsize Small $T_{sub}$.]{\includegraphics[width=0.48\textwidth, trim= 230 200 250 200,clip]{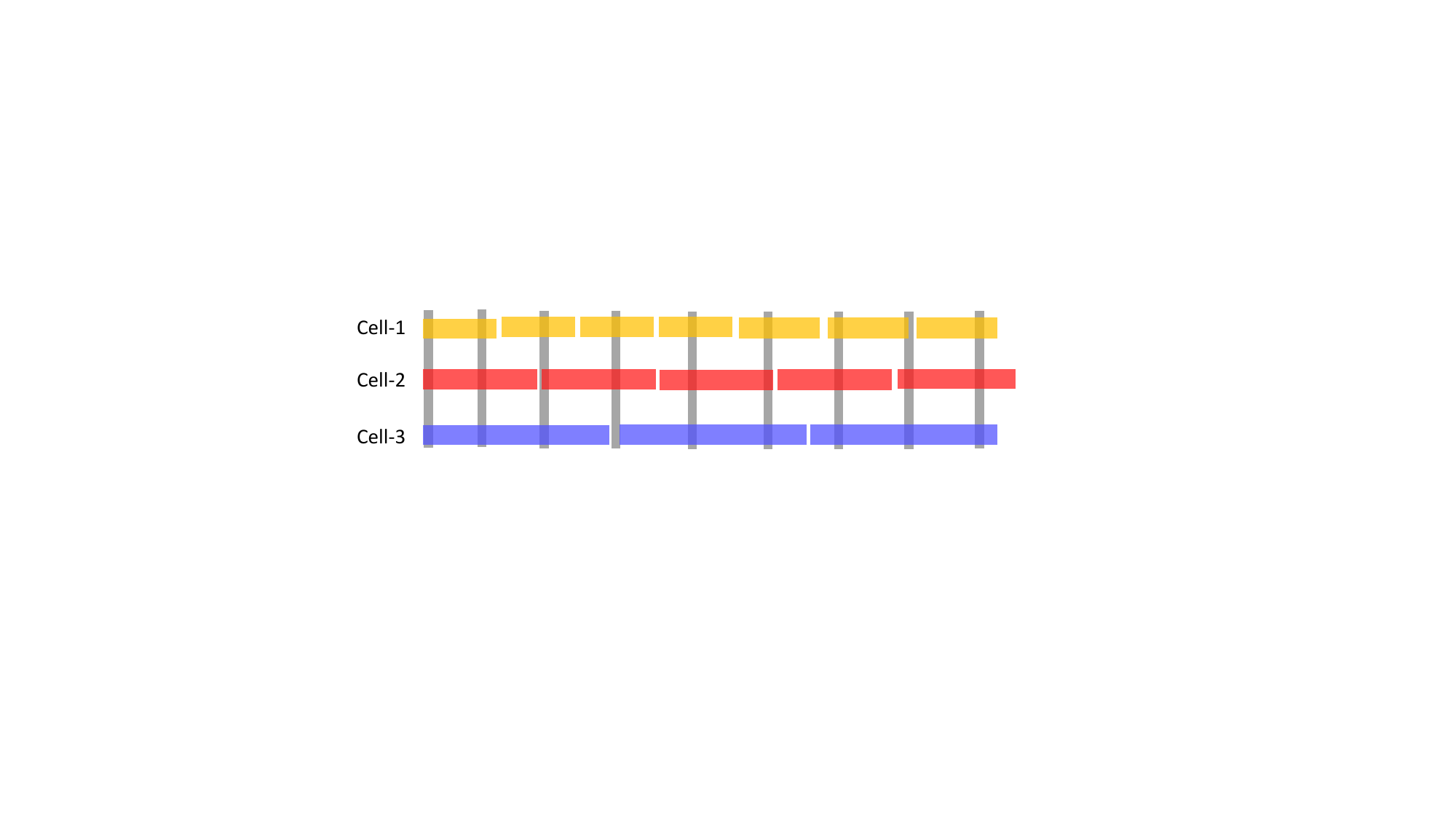}}
		\subfigure[\scriptsize Large $T_{sub}$.]{\includegraphics[width=0.48\textwidth, trim= 230 220 250 150,clip]{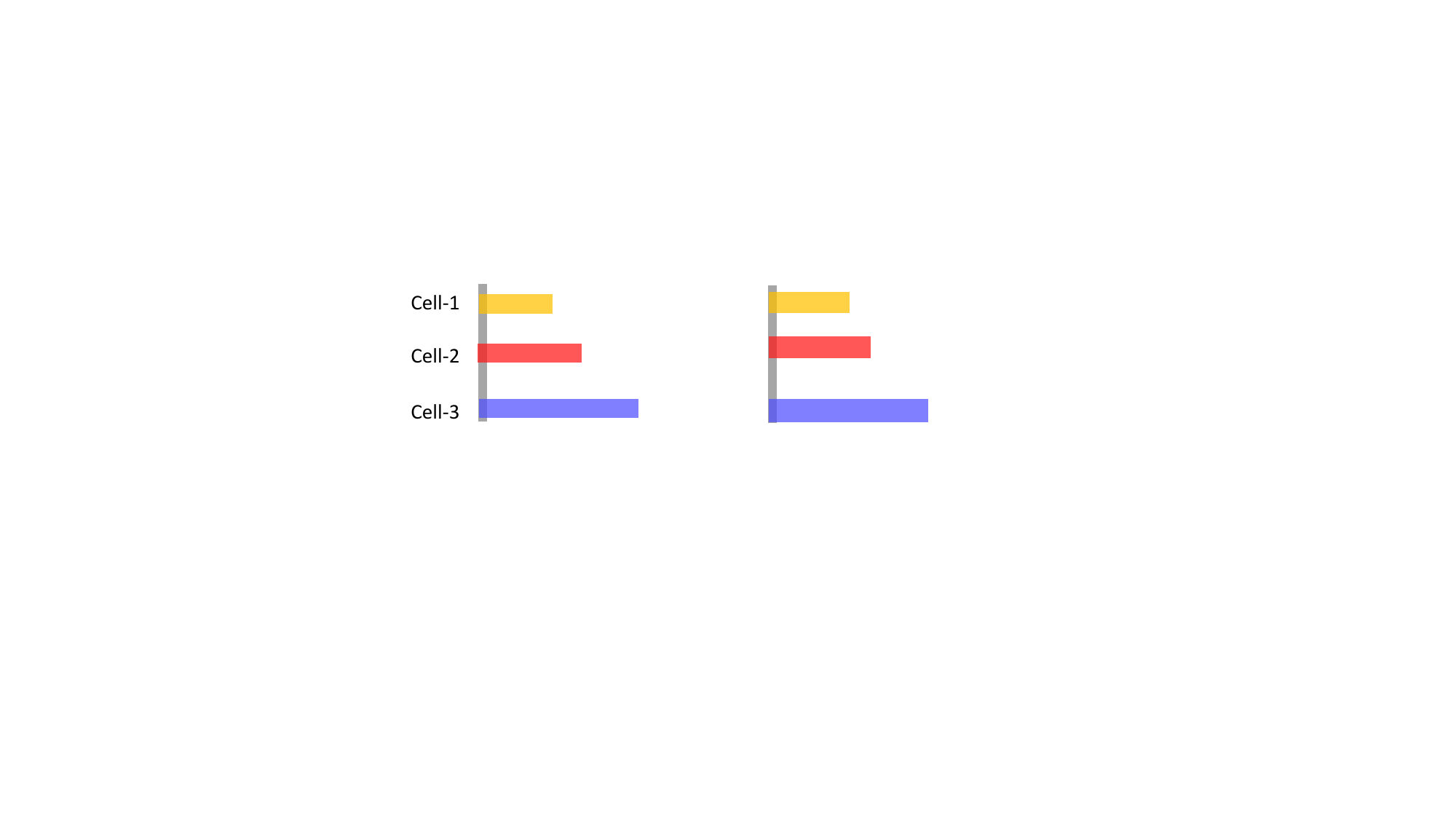}}
	\end{center}
	\caption{Two extreme scenarios: constant congestion and no congestion.}\label{fig 22}
\end{figure}

\begin{figure}[h]
	\begin{center}
		\subfigure[\scriptsize $K_1$=300, $K_2$=1000.]{\includegraphics[width=0.4\textwidth,trim=  100 265 120 265,clip]{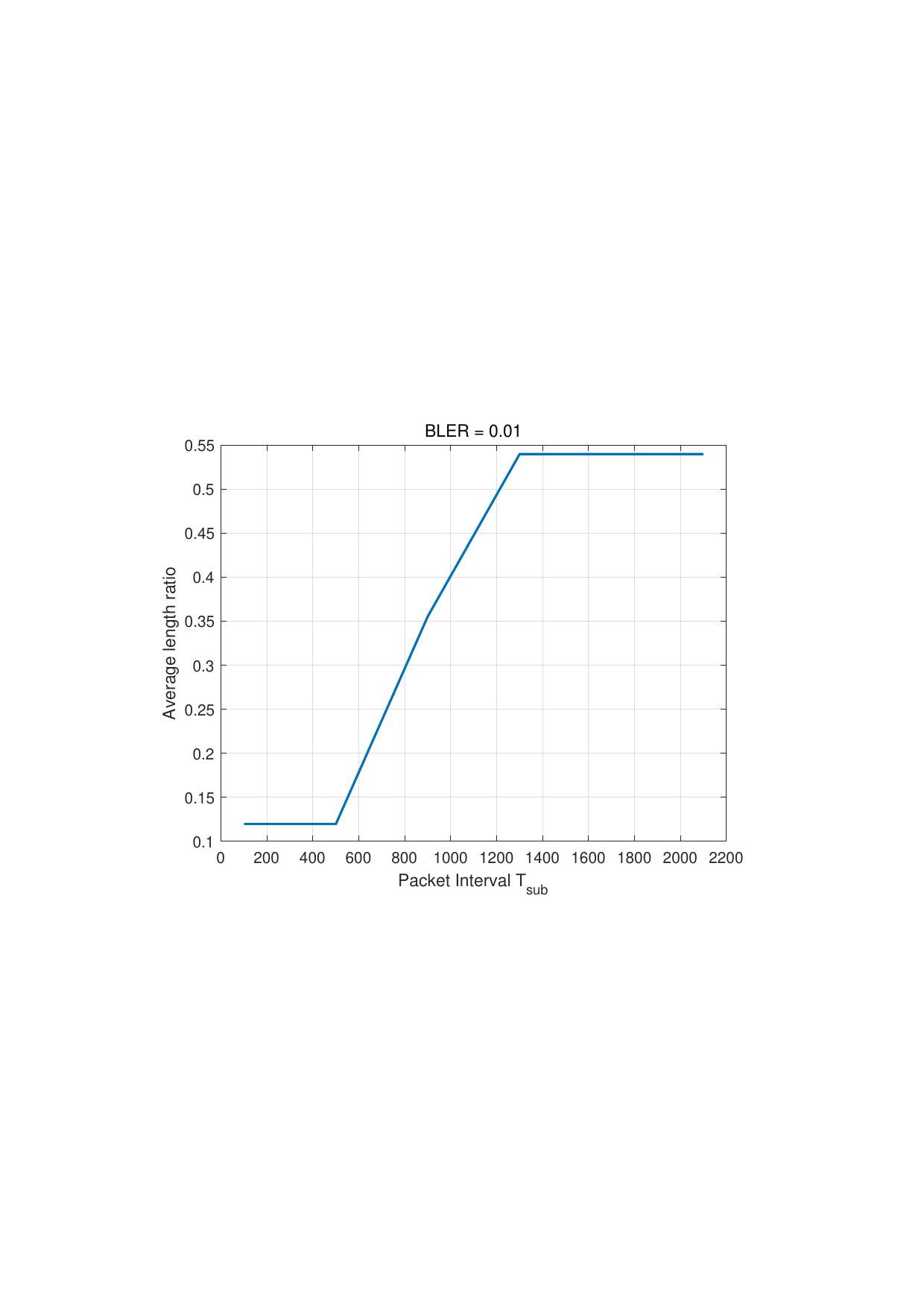}}
		\subfigure[\scriptsize $K_1$=600, $K_2$=1000.]{\includegraphics[width=0.4\textwidth, trim=  100 265 120 265,clip]{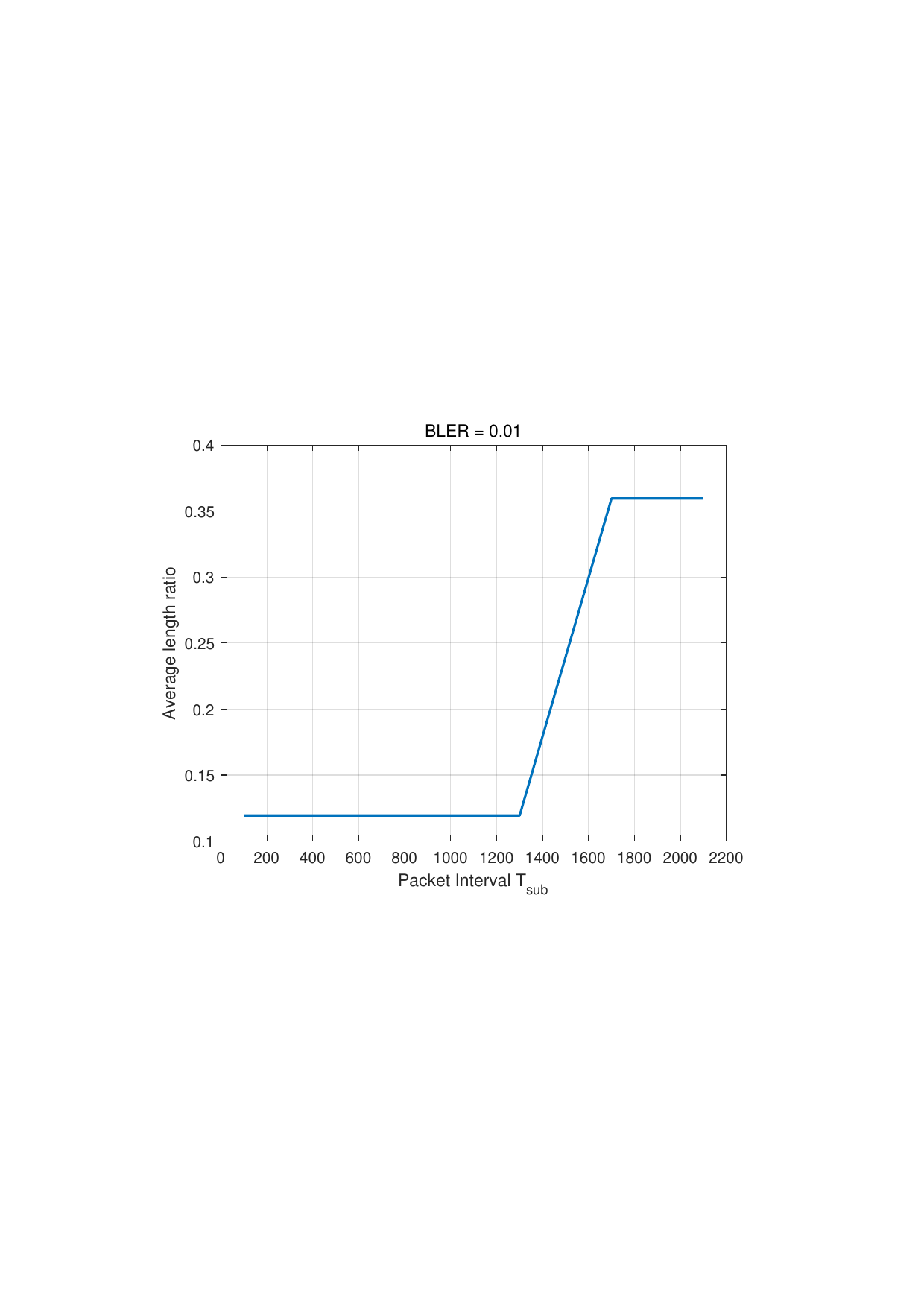}}
	\end{center}
	\caption{The ratio of code length savings under different packet intervals with two cells.}\label{fig23}
\end{figure}

In Fig.\ref{fig23}, the $y$-axis represents the ratio of the codeword length savings achieved by employing variable-length codes in contrast to fixed-length codes, represented as $1 - \frac{l_S}{n_S}$. Here, $l_S$ signifies the average codeword length of variable-length codes required for transmitting $K_S$ bits within cell-$S$ (\ref{eq2.1}) (\ref{eq2.2}), whereas $n_S$ denotes the codeword length of fixed-length codes for conveying the equivalent number of bits within the identical cell. As $T_{sub}$ increases, the initial phase exhibits a stable code length gain, a period marked by persistent congestion that inhibits the realization of interference cancellation benefits. Subsequently, the code length gain experiences an enhancement as the packet interval continues to expand, culminating in a saturation gain. This final gain is achieved when congestion is entirely alleviated, allowing for the full exploitation of interference cancellation gains. Notably, the magnitude of the final saturation gain is directly proportional to the disparity in the number of bits transmitted between cells; the greater this disparity, the higher the ultimate saturation gain realized.

\section{Multi-User Interference Cancellation in Fading Channel}
\label{sec:fading}

In Section \ref{sec:Ssmu}, we consider the model that the $S$ users transmits codes with the same power and different number of information bits. In fact, the interference cancellation gain stems from varying decoding times among users, which can be realized through multiple parameter disparities. For example, different payload lengths lead to varying code rates for decoding, and low-rate codes will be decoded earlier; similarly, different transmission powers also result in diverse decoding times. Even in scenarios where users share the same number of payload bits and transmit at identical power levels, the inherent randomness of fading channels still introduces varying decoding time. The impact of random fading coefficients on the transmission leads to variations in the actual power received. This mechanism ensures that, despite uniform initial conditions, the dynamic environment of fading channels will yield varying decoding time which is necessary for interference cancellation gains.

In this section, the users transmit one of the $M=2^K$ messages uniformly in the block fading interfere channel, i.e., fading coefficients are constant for a single transmission. The channel model is given by
\begin{equation}
Y^s_n = h_{s}X^s_n + \sum_{i\neq s} h_{i}X^i_n + Z^s_n,
\end{equation}
where $h_{i}$ is the fading coefficient for the channel from the user in $i$-$th$ cell, $Z^s_n$ are \emph{i.i.d.} $\mathcal{CN}(0, 1)$ random variables. Similar interference cancellation phenomena as in Fig.\ref{fig1.12} also occur under fading channel even when users have identical number of payload bits and transmission powers.

\begin{figure*}[htbp]
\centerline{\includegraphics[width = 0.9\textwidth,trim=120 150 120 150,clip]{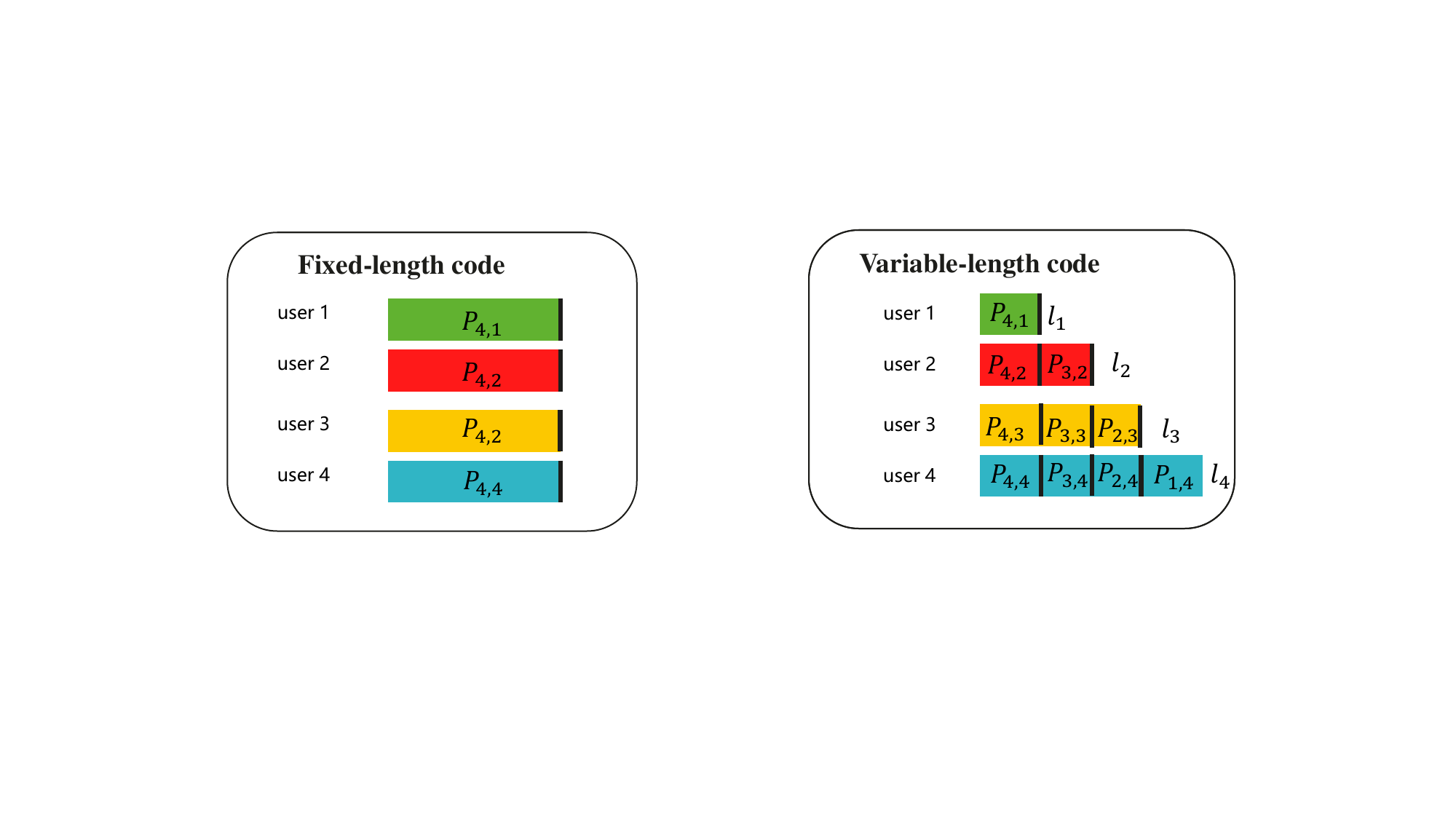}}
\caption{Comparison of fixed-length codes and variable-length codes in multi-user interference cancellation scenarios in fading channel.}
\label{fig1.12}
\end{figure*}

Without loss of generality, assume the decoding order is from user in the $1$-$st$ cell to user in the $S$-$th$ cell, when there are $s$ users transmitting, the SINR for the $j$-$th$ user is denoted as $P_{s,j}=\frac{P|h_{j}|^2}{1+\sum_{t=S-s+1, t\neq j}^{S} P|h_{t}|^2}$, and the corresponding information density functions is denoted as $i_{s,j}(X;Y)$. Denote $C_{s,j} = \EE(i_{s,j}(X;Y)) = \log(1+P_{s,j})$.
\begin{theorem}\label{Thm3.1}
There exists a series of  $(\ell_j, M, P, \varepsilon)$ VLSF codes satisfying
\begin{equation}\label{eq3.1}
\ell_j =  a_j(1-\varepsilon) K + O(\log K)
\end{equation}
where $a_j$ depends on the fading coefficients, their values are given in (\ref{eq3.2}) and (\ref{eq3.3}).
\end{theorem}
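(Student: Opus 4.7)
The plan is to mirror the proof architecture of Theorem~\ref{Thm1.1}, changing only what the fading introduces: the per-phase capacities now depend on the user index as well as on the number of active users. I would keep the same random Gaussian codebook (with codeword symbols i.i.d.\ $\mathcal{N}(0,P)$), the same information-density stopping rule with threshold $\gamma^j = K - \log K$, and the same ``transmit with probability $(1-\varepsilon)K/(K-1)$'' trick at the end to turn an error probability $1/K$ into the prescribed $\varepsilon$ and pull out the factor $(1-\varepsilon)$ in~(\ref{eq3.1}).

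The encoding, decoding, and error-probability analyses transfer verbatim: the bound $(M-1)\PP[\tau^{2,j}<\infty] < Me^{-\gamma^j} = 1/K$ depends only on the Gaussian random coding and not on the channel statistics, so the estimate used in (\ref{eq1.3}) applies unchanged. The substantive work is the length calculation. I would first order users so that user $1$ stops first, user $2$ second, and so on---an ordering fixed by the fading realization and consistent with the notation $P_{s,j}$ for $j\ge S-s+1$. During the $k$-th phase $[\tau^{k-1},\tau^k]$ exactly $S-k+1$ users are active, so for $j\ge k$ user $j$'s incoming information density has per-symbol expectation $C_{S-k+1,j}$.

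Applying Wald's identity to $S^{1,j}_{\tau^{1,j}}$, exactly as in the corresponding step of Theorem~\ref{Thm1.1}, yields, for every $1 \le j \le S$, the triangular system
\begin{equation}
\sum_{k=1}^{j} \EE[\tau^k - \tau^{k-1}]\,C_{S-k+1,j} = \gamma^j + O(1) = K + O(\log K).
\end{equation}
Since the matrix with entries $C_{S-k+1,j}$ (indexed by $k\le j$) is lower triangular with nonzero diagonal, I can solve row by row: $\EE[\tau^1] = K/C_{S,1} + O(\log K)$, and for each $j\ge 2$ the increment $\EE[\tau^j-\tau^{j-1}]$ is obtained by subtracting the already-computed contributions and dividing by $C_{S-j+1,j}$. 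Setting $\EE[\tau^j] = a_j K + O(\log K)$ produces the closed-form $a_j$ reported in (\ref{eq3.2})--(\ref{eq3.3}). The probabilistic-transmission argument then upgrades the error probability from $1/K$ to $\varepsilon$ and multiplies the mean length by $(1-\varepsilon)K/(K-1)$, delivering (\ref{eq3.1}).

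The main obstacle I anticipate is bookkeeping rather than a new analytic idea. In Theorem~\ref{Thm1.1} all users saw the \emph{same} capacity sequence $C_S, C_{S-1},\dots$, which allowed the nice telescoping at the end of that proof. Here each user $j$ has its own sequence $C_{S,j}, C_{S-1,j},\dots,C_{S-j+1,j}$ driven by its individual fading gain, so no such collapse occurs and $a_j$ must be written as a nested sum indexed by all prior stopping phases. A secondary point requiring care is the justification of Wald's identity phase-by-phase: one needs each $\tau^k$ to be a stopping time of finite mean with respect to $\mathcal{G}_n$, and the per-symbol information-density increments within phase $k$ to be i.i.d.\ and independent of the past---both follow because the codewords of different users are generated independently and the block fading coefficients are fixed within a transmission.
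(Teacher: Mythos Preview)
Your proposal is correct and follows essentially the same approach as the paper: same random codebook, same threshold (the paper writes $\gamma = K+\log K$, almost certainly a typo for $K-\log K$ given the stated $1/K$ error probability), same Wald-identity recursion $\EE[\tau^j]=\frac{1}{C_{S-j+1,j}}\bigl(\gamma+\sum_{k=1}^{j-1}\EE[\tau^k](C_{S-k,j}-C_{S-k+1,j})\bigr)+O(1)$ solved inductively to obtain~(\ref{eq3.2})--(\ref{eq3.3}), and the same $(1-\varepsilon)K/(K-1)$ transmission-probability trick at the end. Your observation that no telescoping collapse occurs here because each user sees its own capacity sequence $C_{\cdot,j}$ is exactly the distinction the paper's recursive definition of $a_j$ encodes.
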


\begin{proof}
We use the same encoding and decoding method as Theorem \ref{Thm1.1}. Now since the number of transmitted bits is the same for all users $\gamma^j = \gamma \triangleq K +  \log K$.

Similar to (\ref{eq1.2}), we have
\begin{equation}
\EE[\tau^j] = \frac{1}{C_{S-j+1,j}} (\gamma + \sum_{k=1}^{j-1} \EE[\tau^k] (C_{S-k,j} -   C_{S-k+1,j})) + O(1).
\end{equation}

Denote
\begin{equation}\label{eq3.2}
a_1 = \frac{1}{C_{S,1}}
\end{equation}

and
\begin{equation}\label{eq3.3}
a_{k} = \frac{1}{C_{S-k+1,k}}(1 + \sum_{t=1}^{k-1} a_t (C_{S-t,k} - C_{S-t+1,k}) )
\end{equation}
for $2\leq k\leq S$. Note that $\EE[\tau^1] \leq \frac{\gamma}{C_{S,1}} + O(1) = a_1\gamma + O(1)$.  Assume by induction, $\EE[\tau^k] \leq a_k\gamma + O(1)$ for $1\leq k\leq j-1$. Then
\begin{align}
\notag
\EE[\tau^j] & \leq \frac{1}{C_{S-j+1,j}}(\gamma + \sum_{k=1}^{s-1} (a_k\gamma + O(1)) (C_{S-k,j} -   C_{S-k+1,j}) )\\ 
& + O(1) \\
& \leq \frac{1}{C_{S-j+1,j}}(1 + \sum_{k=1}^{s-1} a_k (C_{S-k,j} -   C_{S-k+1,j}) )\gamma + O(1) \\
& = a_j\gamma + O(1).
\end{align}

Therefore,
\begin{equation}
\EE[\tau^j] = a_j K + O(\log K).
\end{equation}

Hence, there exists a series of $(\ell'_j, M, P, \frac{1}{K})$ VLSF code satisfying
\begin{equation}
l'_j = a_j K + O(\log K).
\end{equation}

Similar as the proof in Theorem \ref{Thm1.1}, there exists  $(\ell_j, M, P, \varepsilon)$ VLSF codes satisfying
\begin{equation}
l_j =  a_j(1-\varepsilon) K + O(\log K).
\end{equation}
\end{proof}

\begin{example}
Consider a scenario where $S$ users are distributed across $S$ cells, with each user transmitting messages within their respective cell at a power level $P$. The fading coefficients, represented as $h_{1},\dots,h_{S}$, are modeled as Rayleigh random variables, implying that the squared magnitudes $|h_{1}|^2,\dots, |h_{S}|^2$ follow exponential distributions with a mean of $1$. Assuming that $|h_{1}|^2>\dots>|h_{S}|^2$, the order statistics of the exponential distribution, as detailed in \cite[Theorem 4.6.1]{order_sta}, dictate that the distribution of $|h_{S-j}|^2$ is equal to as $\sum_{i=0}^j \frac{z_i}{S-i}$, where $z_i, 0 \leq i \leq S-1$, are independent exponential random variables with a mean of $1$. Furthermore, the expected value of $|h_{S-j}|^2$ is given by $\E[|h_{S-j}|^2]=\sum_{i=0}^j \frac{1}{S-i}$.

Recall that when there are $s$ users transmitting, the SINR for the $j$-$th$ user is denoted as $P_{s,j}=\frac{P|h_{j}|^2}{1+\sum_{t=S-s+1, t\neq j}^{S} P|h_{t}|^2}$. We provide a set of typical values for the fading coefficients: $|h_{j}|^2=\E[|h_{j}|^2]$. Therefore, $P_{s,j}=\frac{P\sum\limits_{i=0}^{S-j} \frac{1}{S-i}}{1+P\sum\limits_{t=S-s+1, t\neq j}^{S} \sum\limits_{i=0}^{S-t} \frac{1}{S-i}}$.

In Fig.\ref{fig3.1}, a comparison is illustrated between the average lengths of codewords for fixed-length codes and variable-length codes under a block fading channel model. It is evident that the average length of codewords for the variable-length code is significantly shorter than that of the fixed-length code. This notable gain in efficiency can be attributed to the inherent randomness of the fading channel. The variable-length coding scheme leverages the stochastic nature of the channel to adaptively adjust the codeword length, thereby optimizing resource utilization and minimizing redundancy. This adaptability allows for shorter average codeword lengths, as the scheme can terminate transmissions earlier when channel conditions are favorable, without compromising on the reliability of the communication.
\end{example}	

\begin{figure}[htbp]
\centerline{\includegraphics[width = 0.5\textwidth, clip, trim= 100 265 120 265,clip]{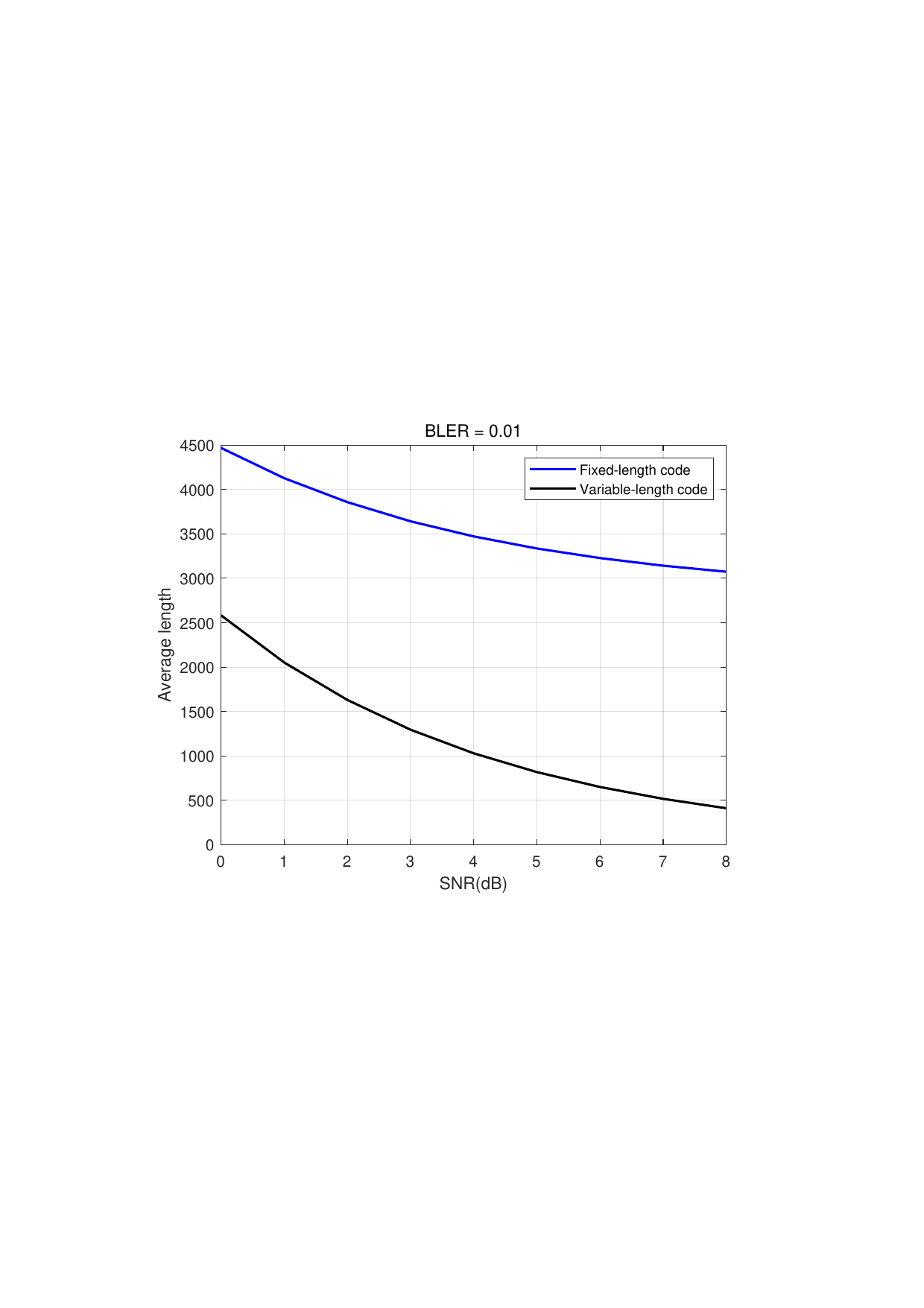}}
\caption{Comparison of average code lengths between fixed-length codes and variable-length codes under fading channel conditions, $K=1000$, $S=2$.}
\label{fig3.1}
\end{figure}

\section{Power Allocation Strategy in Fast Fading Channel}
\label{sec:waterfilling}

In this section, we analyze the case where $S$ users are engaged in transmitting messages within a fast fading channel environment. Each user operates at a common average power level denoted as $P$. The channel model is characterized by
\begin{equation}
Y^s_n = h_{s,n}X^s_n + \sum_{i\neq s} h_{i,n}X^i_n + Z^s_n,
\end{equation}
where $h_{i,n}$ represent \emph{i.i.d.} fading coefficients. Let $\gamma_{s,n}= |h_{s,n}|^2$ denote the received fading gain for the $s$-$th$ user at the $n$-$th$ time slot. The primary objective is to allocate power based on the instantaneous fading gain $\gamma_{s,n}$ in a multi-user scenario, aiming to enhance the overall capacity. Contrary to the classical multiuser water-filling solution presented in \cite{constant3}, our approach simplifies the power allocation strategy in the following two significant ways.

Firstly, we implement a constant power allocation water-filling method. When the fading coefficient $\gamma_{s,n}$ surpasses a certain threshold $\gamma_{th}$, which is subject to optimization, the transmitting power is set to $\frac{P}{\PP(\gamma \geq \gamma_{th})}$, where $\PP(\gamma \geq \gamma_{th})$ represents the probability of the SNR exceeding the threshold. This ensures that power is allocated only when the channel conditions are favorable. Conversely, if $\gamma_{s,n}$ falls below the threshold $\gamma_{th}$, the user refrains from transmitting to minimize interference. It is evident that, under this scheme, the average transmitting power for each user remains precisely at $P$.

Secondly, in a MAC scenario, transmission privileges are typically awarded to the user exhibiting the most favorable channel conditions. This approach, however, requires users to have knowledge of all other users' channel fading coefficients. To circumvent the complexity and impracticality of such information exchange, we develop an optimization model where users in cell-$s$ determine their power allocation based solely on their own fading gain $\gamma_{s,n}$. This user-centric approach reduces the need for global channel state information, making the power allocation strategy more feasible in multi-cell environments.

In summary, our power allocation strategy combines a threshold-based constant power allocation with a user-centric optimization model. It simplifies the implementation while still enhancing capacity in multi-cell scenarios. This approach not only reduces the computational complexity and signaling overhead but also ensures a more equitable distribution of resources among users.

Since each user exhibits symmetry, let us consider the first user as an illustrative example. Given the threshold $\gamma_{th}$, the capacity of user-$1$ can be determined as
\begin{align}\label{eq:4.1}
\notag
C(\gamma_{th}) = & \int_{\gamma_{th}}^{\infty}\cdots \int_{0}^{\infty} \\
\notag
& \log(1+ \frac{\gamma_{1}/ \PP(\gamma_{1}>\gamma_{th})}{1/P + \sum\limits_{i=2}^S \gamma_{i}/ \PP(\gamma_{i}>\gamma_{th}) I(\gamma_{i}>\gamma_{th})})\cdot \\ 
& p(\gamma_{1}) \cdots p(\gamma_{i})d\gamma_{1} \cdots d\gamma_{S}.
\end{align}

If we neglect the interference cancellation benefits from other users and assume they transmit messages at power $P$ with channel gains $\gamma_{i}=1$, The problem is equivalent to a single-user optimization problem when the noise variance is equal to $1/P+S-1$. According to \cite[Section IV]{constant2}, we have that the optimal threshold $\gamma_{single}$ is the solution of
\begin{equation}
\gamma_{th} e^{\gamma_{th}} = 1/P+S-1.
\end{equation}

Now take the interference cancellation of other users into consideration. The capacity lower bound based on one-dimensional integral is
\begin{align}
\notag
C(\gamma_{th}) \geq & \int_{\gamma_{th}}^{\infty} \sum_{t=0}^{S-1} \binom{S-1}{t}\PP(\gamma<\gamma_{th})^{S-1-t} \PP(\gamma\geq \gamma_{th})^{t} \cdot \\
& \log(1+ \frac{\gamma / \PP(\gamma>\gamma_{th})}{1/P + t\bar{\gamma}}) p(\gamma) d\gamma \triangleq {C}_{l}(\gamma_{th}),
\end{align}
where $t$ is the number of active users and $\bar{\gamma}=\int_{r_{th}}^{\infty} \frac{\gamma p(\gamma)}{\PP^2(\gamma\geq \gamma_{th})} d \gamma$ is the average received channel gain condition on $\gamma\geq \gamma_{th}$. The inequality is from that $\log(1+c/x)$ is a convex function and Jensen Inequality.
 Let $\gamma_{multi}$ be the optimized value $\gamma_{th}$ to maximize ${C}_{l}(\gamma_{th})$.

For example, if the fading coefficient $h$ is a Rayleigh distribution, we have
\begin{align}
\notag
C_l(\gamma_{th}) = & \int_{\gamma_{th}}^{\infty} \sum_{t=0}^{S-1} \binom{S-1}{t} (1-e^{-\gamma_{th}})^{S-1-t} e^{-\gamma_{th}t} \cdot \\
& \log(1+ \frac{\gamma}{e^{-\gamma_{th}}/P + t(1+\gamma_{th})}) e^{-\gamma}d\gamma,
\end{align}
where $\bar{\gamma}=(1+\gamma_{th})e^{\gamma_th}$.

Fig.\ref{fig41} showcases the thresholds $\gamma_{single}$ and $\gamma_{multi}$ alongside the capacities (\ref{eq:4.1}) induced by these thresholds and their respective capacity lower bounds $C_{single}=C_l(\gamma_{single})$ and $C_{multi}=C_l(\gamma_{multi})$. A notable observation is that $\gamma_{multi}$ is higher than $\gamma_{single}$, yet it corresponds to a superior capacity. This result can be attributed to the strategic transmission behavior encouraged by the multi-user threshold $\gamma_{multi}$. When users are inclined to transmit messages primarily through favorable channel conditions, the interference imposed on other users is significantly reduced. This reduction in interference stems from the fact that users with bad channel conditions pause their transmissions, leaving the channel less crowded for other users to transmit with less interference. Consequently, the system as a whole experiences enhanced capacity, as users are able to communicate more efficiently and with fewer disruptions. The figure thus highlights the strategic advantage of the multi-user threshold $\gamma_{multi}$ in managing interference and optimizing resource utilization within a multi-user communication environment, leading to improved overall system performance.

\begin{figure}[h]
	\begin{center}
		\subfigure[\scriptsize Capacity and capacity lower bound under threshold $\gamma_{multi}$ and $\gamma_{single}$ respectively.]{\includegraphics[width=0.4\textwidth,trim=  100 265 120 265,clip]{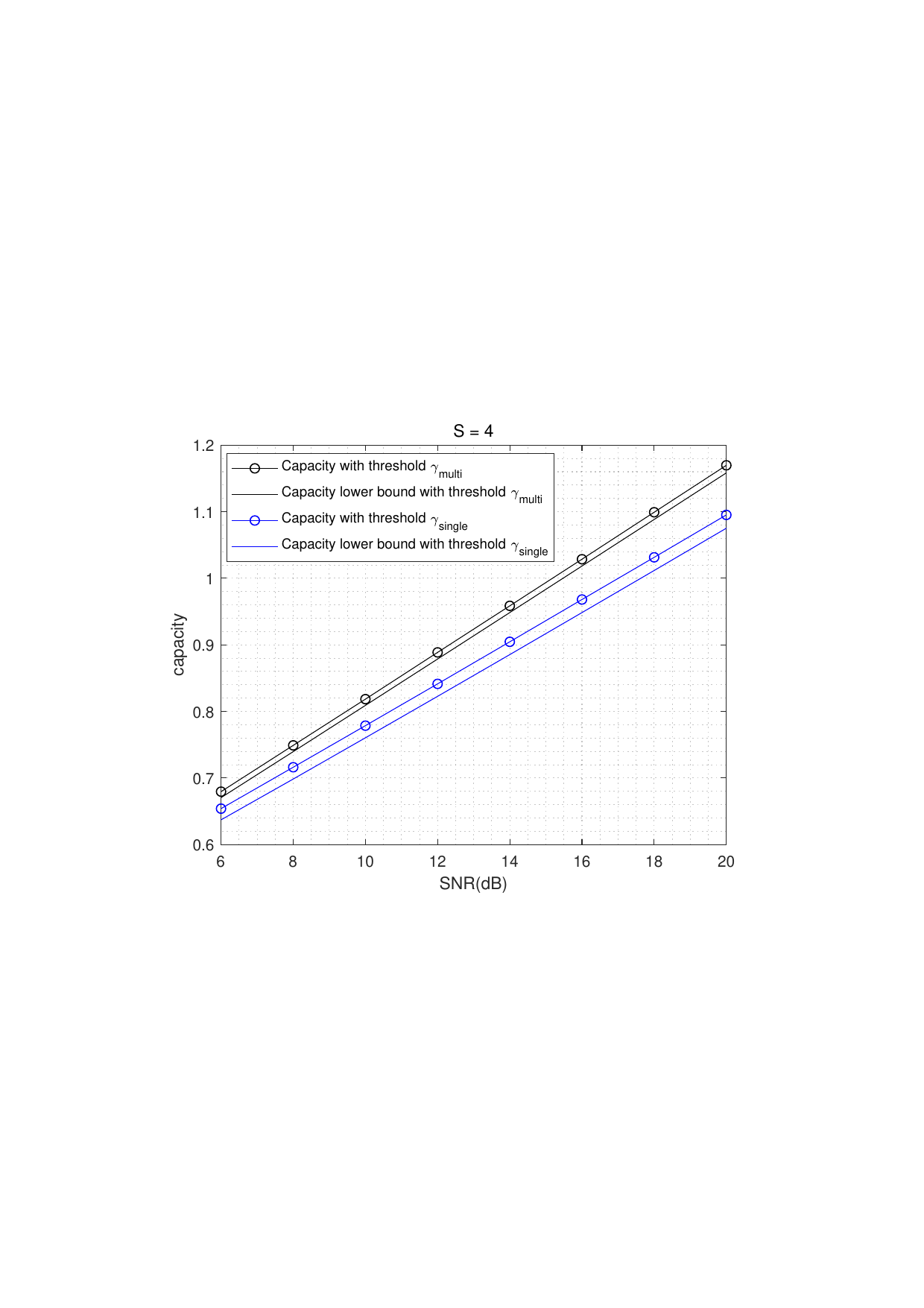}}
		\subfigure[\scriptsize  Comparsion of $\gamma_{multi}$ and $\gamma_{single}$.]{\includegraphics[width=0.4\textwidth, trim=  100 265 120 265,clip]{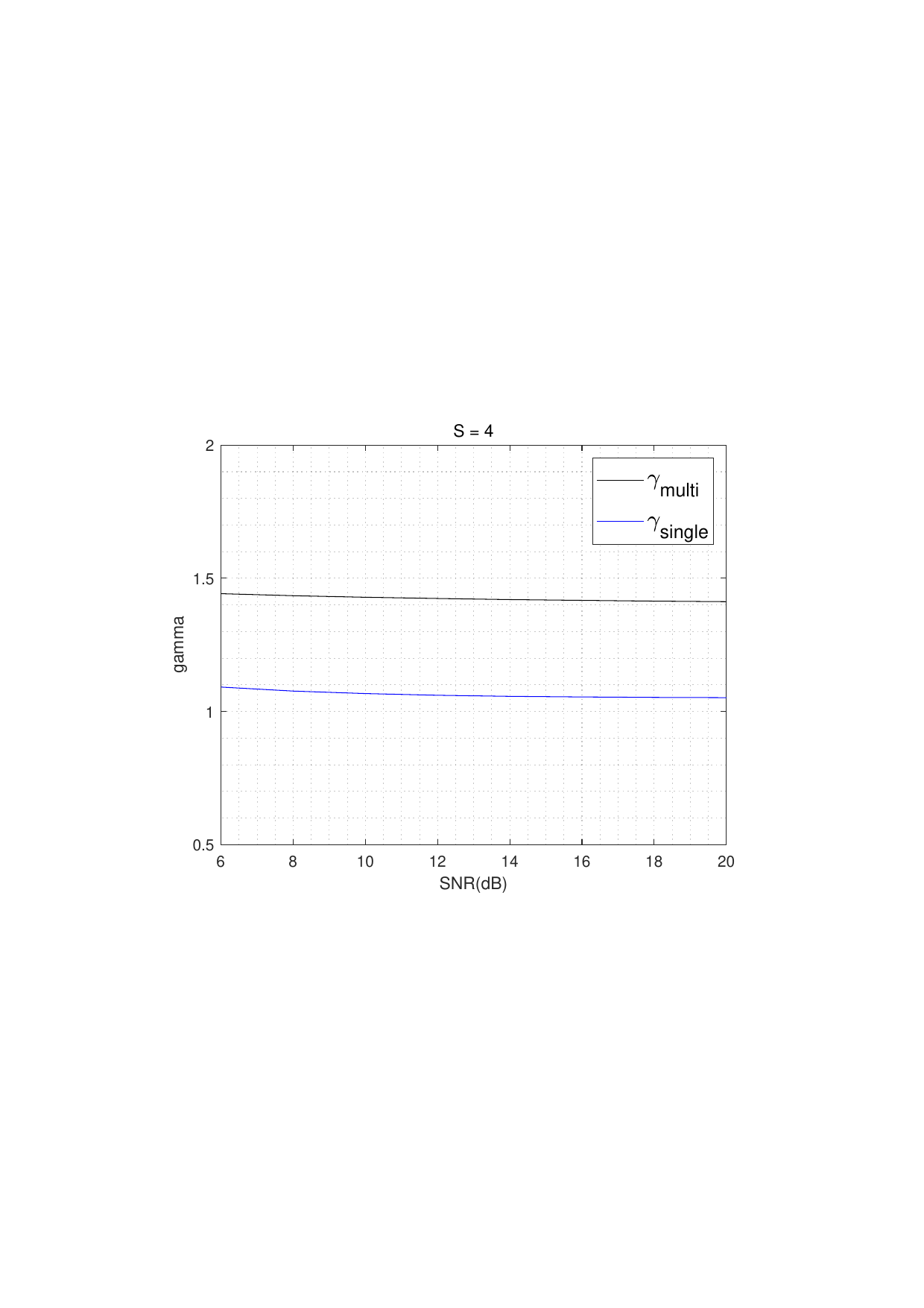}}
	\end{center}
	\caption{Capacity and capacity lower bound under different thresholds in fast fading channel.}\label{fig41}
\end{figure}

\section{Conclusion}
\label{sec:conclusion}
In this paper, we develop analytical models for coded water-filling techniques to evaluate the system-level gains of early stopping in multi-cell, multi-user scenarios. Under the AWGN channel, we demonstrate that the interference cancellation benefits of early stopping originate from the disparity in the number of information bits transmitted by users. In fading channel, even when user parameters are symmetric, interference cancellation gain is still attainable thanks to the stochastic nature of fading coefficients. In addition to performance analysis, we also studied the optimal power allocation strategy for coded water-filling. In particular, we focus on the practical scenario where an individual user only knows its own channel state information. We show that under multi-user conditions, the optimal threshold for water-filling is higher than its single-user threshold. This indicates that pausing transmission at lower SNR to minimize interference can lead to higher system capacity. These theoretical results underscore the significance of coded water-filling in a potentially crowded multi-cell wireless network.

\end{document}